\newcommand{\ignore}[1]{}
\newcommand{\nn}{{\cal N}}
\newcommand{\ee}{{\cal E}}
\newcommand{\att}{{\cal A}}
\definecolor{herecolor}{rgb}{0.1,0.6,0.1 }
\newcommand{\bartb}[1]{\textcolor{black}{#1}}
\newcommand{\bartwashere}[1]{\bigskip
\textcolor{herecolor}{\sc **** Bart has read till here ****}\bigskip\newline }
 \newcommand{\ab}{\allowbreak}
\newtheorem{definition}{Definition}
\newtheorem{example}{Example}
\newtheorem{theorem}{Theorem}
\newtheorem{proposition}{Proposition}
\newtheorem{proof}{Proof}
\newtheorem{property*}{Property}
\newtheorem{lemma*}{Lemma}
\newtheorem{query}{Query}
\newtheorem{remark}{Remark}
\def\squareforqed{\hbox{\rlap{$\sqcap$}$\sqcup$}}
\def\qed{\ifmmode\squareforqed\else{\unskip\nobreak\hfil
\penalty50\hskip1em\null\nobreak\hfil\squareforqed
\parfillskip=0pt\finalhyphendemerits=0\endgraf}\fi}
\lstdefinestyle{tiny}{
  basicstyle=\sffamily\small, 
  numbers=none
  }
\newcommand{\out}[1]{}
\date{}
\begin{document}

\title{Online Analytical Processsing on Graph Data\footnote{This is a draft version of the work
that will appear in Volume 24(2) of the Intelligent Data Analysis Journal, in early 2020.}}
\author{Leticia G\'omez${}^1$, Bart Kuijpers${}^2$,  
Alejandro Vaisman${}^3$}
\maketitle

\begin{abstract}
Online Analytical Processing (OLAP) comprises tools and algorithms that allow querying multidimensional  databases. It is based 
on the multidimensional model, where  data can be seen as a cube such that each cell contains one or more  measures that can be aggregated along  dimensions. 
In a ``Big Data'' scenario, traditional data warehousing and OLAP operations   are  clearly not sufficient to address    current data analysis requirements, for example,   social network analysis. Furthermore,   OLAP operations and models can expand the possibilities of  
graph analysis beyond  the traditional graph-based computation. Nevertheless, there is not much work on the problem of taking 
OLAP analysis to the graph data model. 
 
This paper proposes a formal multidimensional model for graph analysis, 
that considers  the basic graph data, and also
background information in the form of dimension hierarchies. The graphs in this  model are
node- and edge-labelled directed multi-hypergraphs,  called \emph{graphoids}, which can 
be defined at several different levels of granularity using the dimensions
associated with them.  Operations analogous to  the ones used in typical
OLAP over  cubes  are defined  over graphoids.  The paper presents a formal definition of the graphoid model for OLAP,   proves that the typical OLAP operations on cubes can be expressed over the graphoid model, and shows that the classic data cube model is a particular case of the graphoid  data model. Finally, a case study supports the claim that, 
for many kinds of OLAP-like analysis on graphs,  the graphoid model works better than 
the typical relational OLAP alternative, and for the classic OLAP queries, it remains competitive. 
 \end{abstract}

\par \noindent
{\bf Keywords}:
OLAP,  Data Warehousing,  Graph Database, Big Data, Graph Aggregation
  
\footnotetext[1]{Instituto Tecnol{\'o}gico de Buenos Aires, Buenos Aires, Argentina; email:  lgomez@itba.edu.ar}
\footnotetext[2]{Hasselt University, Belgium; email: bart.kuijpers@uhasselt.edu}
\footnotetext[3]{Instituto Tecnol{\'o}gico de Buenos Aires, Buenos Aires, Argentina; email:  avaisman@itba.edu.ar (Corresponding author).
}
%%%%%%%%%%%%%%

\section{Introduction} \label{sec:introduction}
Online Analytical Processing(OLAP)~\cite{Kimball1996,VZ14} comprises
tools and algorithms that allow  querying
multidimensional (MD) databases.   
In these databases, data are modelled as {\em data cubes}, 
where each cell contains one or more \textit{measures}
of interest, that quantify \textit{facts}. Measure values can be aggregated
along \textit{dimensions},  organized as   sets of
hierarchies.  Traditional OLAP operations are used to manipulate the data cube, for example:   aggregation and disaggregation of 
measure data along the dimensions; selection of a portion of the cube; or projection of
 the data cube over a subset of its dimensions. The cube is computed after a process called ETL, an acronym for Extract, Transform, and Load,
 which requires a complex and expensive load of work to carry data from the sources to the MD database, 
 typically  a data warehouse (DW). Although  OLAP   has been used for social 
 network analysis~\cite{KraiemFKRT15,RehmanWS13},  in a ``Big Data''  
 scenario, further  requirements appear~\cite{OLAPonBigData}. 
 In the classic paper by Cohen et al.~\cite{CohenDDHW09},  the so-called MAD  skills
 (standing from Magnetic, Agile and Deep) required for data analytics are described.  
 In this scenario, more complex analysis tools are required, that go beyond 
 classic OLAP~\cite{DBLP:conf/sigmod/TangHYDZ17}.    
 Graphs, and, particularly, property graphs~\cite{Hartig14,Robinson13},
 are becoming increasingly popular to model different kinds of networks
 (for instance, social networks, sensor networks, and the kind). Property graphs underlie 
 the most popular graph databases~\cite{Angles2012}.  
 Examples of graph databases and graph processing frameworks  following this model are Neo4j\footnote{\url{http:// www.neo4j.com}}, Janusgraph\footnote{\url{http://janusgraph.org/}} (previously called Titan),  and GraphFrames\footnote{\url{https://graphframes.github.io/}}. In addition to traditional graph  analytics, it is also   
 interesting for the data scientist to 
 have the possibility of performing  OLAP on graphs.  
 
 From the discussion above, it follows that, on the one hand, traditional 
 data warehousing  and OLAP operations  on cubes are 
 clearly not sufficient to address the  current data analysis requirements; on the other hand, 
 OLAP operations and models can expand the possibilities of  graph analysis beyond 
 the traditional graph-based computation, like shortest-path, centrality analysis and so on.
 In spite of the above,   not many proposals   have been presented in this 
 sense so far. In addition, most of the existing work 
 addresses homogeneous  graphs (that is, graphs where all  nodes are of the same type), 
 where the measure of interest is related  to the OLAP analysis  on the graph 
 topology~\cite{graphOLAP,Wang2014,graphCube}. Further, existing works only address 
 graphs with binary relationships (see  Section~\ref{sec:related} for an in-depth discussion on these issues). 
 However, real-world graphs 
 are complex and often heterogeneous, where nodes and edges can be of different  
 types, and relating different numbers of entities.  
 
This paper proposes a MD data model for graph analysis, that considers not only the basic graph data, but 
background information in the form of dimension hierarchies  as well. The graphs in this  model are node- and
edge-labelled directed multi-hypergraphs, called \emph{graphoids}. In  essence, these can be denoted ``property 
hypergraphs''. A graphoid can be defined at several different levels of granularity, using the dimensions associated 
with them. For this, the {\sf Climb} operation is available. Over this model, operations like 
the ones used in typical OLAP on cubes are defined, namely {\sf Roll-Up}, {\sf Drill-Down}, {\sf Slice}, and {\sf Dice}, as
 well as other operations for graphoid manipulation, e.g., {\sf n-delete} (which deletes nodes).
%%%%%%%%%%%%%%%%%%%%%%
The  hypergraph model allows a  natural representation of facts with different dimensions, since hyperedges can connect a variable number of nodes of different types. A typical example is the analysis of  phone calls,  the running example that will be used throughout this paper. Here, not only point-to-point calls between two partners can be represented, but also ``group calls'' between any number of participants.   In classic OLAP~\cite{Kimball1996},  a group call must  be represented by means of  a fact table containing a fixed number of columns (e.g., caller, callee, and the corresponding measures). Therefore, when the OLAP analysis for telecommunication information concerns point-to-point calls between two partners, the relational representation (denoted  ROLAP) works fine, but when this is not the case, modelling and querying issues appear, which calls for a more natural representation, closer to the original data format. And here is where the hypergraph model comes to the rescue~\cite{GomezKV17}. 
% As an additional feature, the graph approach can adapt to many different analysis settings. 
%That is,  graphoids can be used on an ETL scenario, where data are carefully modelled and %cleaned, 
%and directly  exported from a DW to a graph. Or they could be applied on an ELT scenario, %where, for instance, data are loaded into so-called ``data  lakes''\footnote{\url{https://%jamesdixon.wordpress.com/2010/10/14/pentaho-hadoop-and-data-lakes/}}, 
%with minimum preparation. 
In summary, the  main contributions of the paper  are:

 \begin{enumerate}
\item A graph data model based on the notion of graphoids;
\item The definition of a collection of OLAP operations over these graphoids;
\item A proof that the classical OLAP operations on cubes can be simulated by  the OLAP operations defined in the graphoid model and,  therefore, that these graphoid-based operations are at least as powerful as the  classical OLAP operations on cubes;
\item A case study and  a series of experiments,  that give the intuition of a class of problems where the graphoid model works  clearly better than relational OLAP, whereas  for  classic OLAP queries, the graph representation is still competitive with the relational alternative.  
\end{enumerate}

In addition to the above, of course  all the classic  analysis  tools from graph theory are supported by the model, although this topic is
beyond the scope of  this paper.

%%%%%%%%%%%%%%%%
\begin{remark} This paper does not claim that the graphoid model  is always more appropriate than  the classic relational OLAP representation.  Instead, the proposal  aims at showing  that when   a more flexible model is needed, where \textit{n-ary} relationships between instances are present (and \textit{n} is variable), the model allows  not only for  a more natural representation, but also can  deliver better performance for some critical queries. \qed
\end{remark}
%%%%%%%%%%%%%%

The remainder of this paper is organized as follows:  Section~\ref{sec:related} discusses related work. Section~\ref{sec:datamodel}
presents the graphoid data model. Section~\ref{sec:olap-operations} presents the OLAP operations on graphoids, while 
  Section~\ref{sec:classical-olap}  shows that the graphoid OLAP operations capture the 
classic OLAP operations on cubes. Section~\ref{sec:casestudy} discusses a case study and presents an experimental analysis.  
  Section~\ref{sec:conclu} concludes the paper.

 \section{Related Work}
  \label{sec:related}
  
 The model described in the next sections is based on the notion of 
 property graphs~\cite{AnglesABHRV17}. In this model, nodes and edges (hyperdeges, as will be  explained later) are labelled with a  sequence  of  attribute-value pairs. It will be assumed that the values of  the  attributes   represent members  of dimension levels (i.e., each attribute value is an element in the domain of a dimension level), and thus nodes and edges  can be  aggregated, provided  that an attribute hierarchy is defined over those dimensions.  Property graphs are the usual choice in modern graph database models used in practical implementations. Attributes are included in nodes and edges mainly aimed at  improving  the speed of retrieval of  the data directly related  to  a  given  node. Here,  these attributes are also used to perform OLAP operations.

A key difference between  existing works, and the  proposal introduced in this paper, is that the latter supports the notion of \textit{OLAP hypergraphs}, highly expanding the possibilities of analysis. This way, 
instead of binary relationships between nodes, there are n-ary, probably duplicated relationships, which are typical in 
Data Warehousing and OLAP. Further, supporting n-ary relationships allows naturally modelling OLAP situations where 
different facts have a different number of relations, like in the group calls case commented in Section~\ref{sec:introduction}, and  studied in Section~\ref{sec:casestudy}.  In other words, the model handles multi-hypergraphs.  Also, the paper works over  the classic OLAP operations, and formally defines their meaning in a graph context. This approach allows an OLAP user to work  with the notion of a data cube at the conceptual level~\cite{VZ14}, regardless the kind of underlying data (in this case, graphs),  defining OLAP operations in terms of cubes and dimensions rather than in terms of nodes and edges. Finally, the authors have shown  the usefulness of this proposal in different scenarios, like trajectory analysis~\cite{GKV19} and typical OLAP analysis on social networks~\cite{VBV19}.

%%%%%%%%%%%%%%%%%%%%REFERENCES%%%%%%%

%\subsection{related work additions}
%\cite{DBLP:conf/sigmod/TangHYDZ17,DBLP:journals/tods/Malvestuto14}

%%%%%%%%%%%%%%%%%%%%%%%%%%%%%%% 

\section{Data Model}\label{sec:datamodel}

This section presents the graphoid OLAP data model. First,   background dimensions are formally  defined, along the lines of the classic OLAP literature. Then, the (hyper)graph data model is introduced.
%\subsection{Dimension hierarchy schemas and instances}\label{subsec:dimensions}

%%%%%%%%%%%%%%%%%%%%%%%%%%%%%%%%%
\subsection{Hierarchies and  Dimensions}\label{subsec:instances}
%%%%%%%%%%%%%%%%%%%%%%%%%%%%%%%%%

The notions of dimension schema   and dimension graph (or dimension instance) that will be used throughout the paper, are introduced first. 
These concepts are needed to make the paper self-contained, and to understand the examples. The reader is referred to~\cite{KV17} for full details of the underlying OLAP data model.
 
\begin{definition}[Dimension Schema,  Hierarchy and Level]\rm  \label{def:dimension-schema}
Let $D$ be a name for  a dimension. A \emph{dimension \ab schema $\sigma(D)$ for $D$} is a lattice (a partial order),  
with a unique top-node, called $All$ (which has only incoming edges) 
and a unique bottom-node, called $Bottom$ (which has only outgoing edges), 
such that all maximal-length paths in the graph go from $Bottom$ to $All$.
Any path from $Bottom$ to $All$ in a dimension schema $\sigma(D)$ is called a \emph{hierarchy} of $\sigma(D)$. 
Each node in a hierarchy (that is, in a dimension schema) is called a \emph{level}  of $\sigma(D)$.
\qed
\end{definition}

%\subsection{Running example} 
The running example used throughout this paper  analyses calls between customers, which belong to different companies. For this, as   background (contextual) information for the graph data representing calls (to be explained later), there is  a  Phone  dimension, 
with levels   Phone (representing the phone number), Customer,  City,  Country, and  Operator. There is also
 a   Time  dimension, with levels Date, Month, and Year.
The following examples explain this in detail. 
%dimensions in the running example are shown in Figure \ref{fig:dimensions}. 

\begin{example}\rm  \label{ex:dimension-schema}
Figure~\ref{fig:dimension-schema} depicts the dimension schemas $\sigma(Phone)$ and $\sigma(Time),$ 
for the dimensions $\mbox{Phone}$ and  $\mbox{Time}$, respectively. In addition, there is also a dimension denoted $\mbox{Id}$, representing identifiers, that will be explained later.
In the dimension $\mbox{Phone}$,  it holds that $Bottom=\mbox{Phone}$, and there are two  hierarchies  denoted, respectively, as 
$$\mbox{Phone} \rightarrow \mbox{Customer} \rightarrow \mbox{City} \rightarrow \mbox{Country}\rightarrow All,$$ and $$\mbox{Phone}\rightarrow \mbox{Operator} \rightarrow All.$$  The node $\mbox{Customer}$ is an example of a level in the first of the above  hierarchies. For the dimension $\mbox{Time}$, $Bottom=\mbox{Day}$ holds, as well as 
the hierarchy  $\mbox{Day} \rightarrow \mbox{Month} \rightarrow \mbox{Year}\rightarrow  All$. 
 \qed
\end{example}

\begin{figure}[htb]
\centering
   \centerline{\includegraphics[scale=0.5]{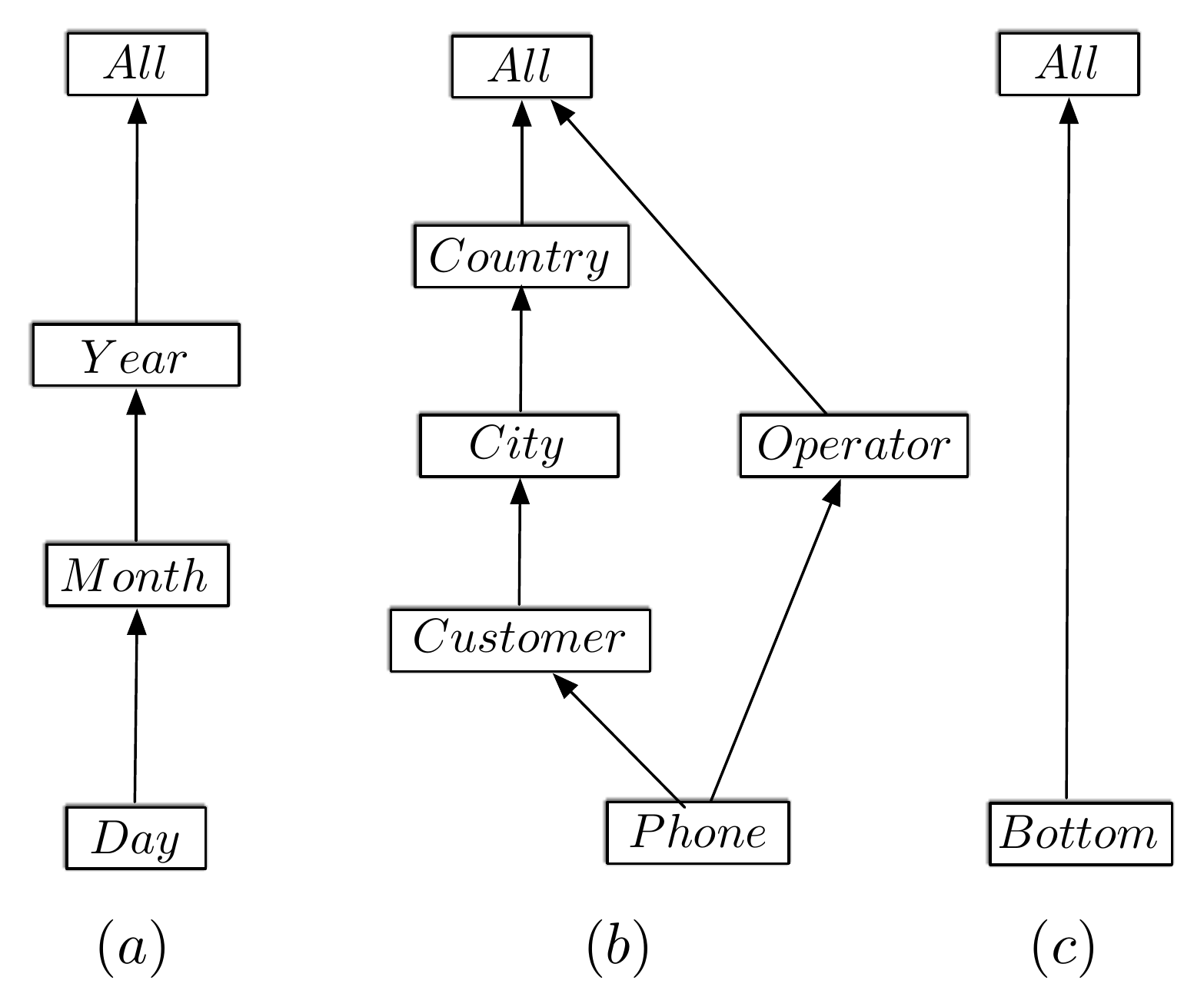}}
\caption{Dimension schemas for the dimensions $\mbox{Time}$  ($a$), $\mbox{Phone}$  ($b$), and $\mbox{Id}$ (identifier) ($c$).}\label{fig:dimension-schema}
\end{figure}

\begin{definition}[Level, Hierarchy, and Dimension Instances]\rm \label{def:instanceGraph}
Let $D$ be a dimension with schema $\sigma(D)$, 
and let $\ell$ be a level  in $\sigma(D)$. A \emph{level instance of $\ell$ } is a non-empty, finite 
set $dom(D.\ell)$. If $\ell=All$, then $dom(D.All)$ is the singleton $\{all\}$. If $\ell=Bottom$, 
then $dom(D.Bottom)$ is the  domain of the dimension $D$, 
that is,  
$dom(D)$.  

A \emph{dimension graph (or instance)} $I(\sigma(D))$ \bartb{over} the dimension schema $\sigma(D)$ is a directed acyclic 
graph with node set
$$\bigcup_{\ell} dom(D.\ell),$$ 
where the union is taken over all levels in $\sigma(D)$. The edge set of this directed acyclic graph
is defined as follows. Let $\ell$ and $\ell'$ be two levels of $\sigma(D)$, and let $a\in dom(D.\ell)$ and  $a'\in dom(D.\ell')$.
\bartb{Then, only if there is a directed edge from $\ell$ to 
$\ell'$ in $\sigma(D)$, there can be a directed edge in $I(\sigma(D))$ from $a$ to $a'$.}

If $H$ is a hierarchy in $\sigma(D)$, then the \emph{hierarchy  instance} (relative to the dimension instance $I(\sigma(D))$)
is the subgraph   of $I(\sigma(D))$ with nodes from $dom(D.\ell)$, for $\ell$ appearing in $H$. This subgraph is denoted 
 $I_H(\sigma(D))$.
\qed
\end{definition}

\begin{remark} A hierarchy instance $I_H(\sigma(D))$ is always a (directed) tree, since a hierarchy is a linear lattice. 
The following terminology is used. 
If $a$ and $b$ are two nodes in a hierarchy instance $I_H(\sigma(D))$, 
such that $(a,b)$ is in the transitive closure of the edge relation of $I_H(\sigma(D))$, 
then it is said that $a$ \emph{rolls-up} to $b$, and  denoted by $\rho_H(a,b)$ (or $\rho(a,b)$ 
 if $H$ is clear from the context). Example~\ref{ex:instanceGraph} illustrates these concepts. \qed
\end{remark}
 
\begin{example}\rm  \label{ex:instanceGraph}
Consider dimension $\mbox{Phone}$   whose schema $\sigma(Phone)$ is given in  Figure~\ref{fig:dimension-schema} ($b$). Associated with this schema, there is an  
instance where  $dom(Phone)=  \ab dom(Phone.Bottom) = \ab dom(Phone.Phone)=$ $ \{Ph_1, Ph_2, $  $ Ph_3, Ph_4,  Ph_5\}$. 
Also, at the $\mbox{Operator}$  level,  $dom(Phone.Operator)=\ab \{ATT,$ $Movistar, \ \ab Vodafone\}$.
This dimension instance $I(\sigma(Phone))$ is depicted in Figure~\ref{fig:dimension-instance}, which shows, e.g., that  phone 
 lines $Ph_2$ and  $\ Ph_4$ correspond to the operator  $Vodafone$.
\qed
\end{example}

\begin{figure}[htb]
\centering
   \centerline{\includegraphics[scale=0.5]{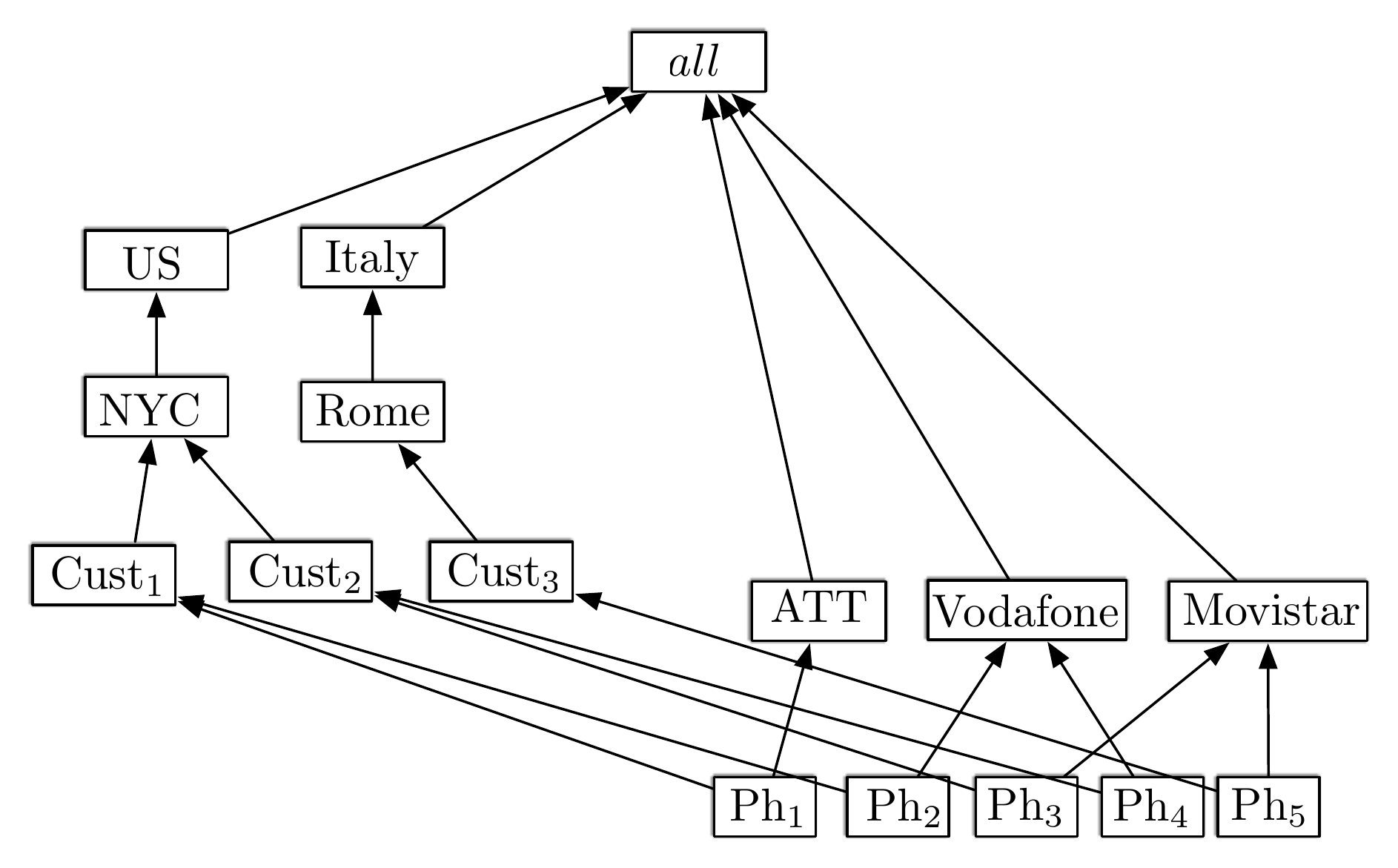}}
\caption{An example of a dimension instance  $I(\sigma(Phone))$ for the dimension $Phone$.}\label{fig:dimension-instance}
\end{figure}

In what follows,  ``sound'' dimension graphs are assumed. In thses graphs,   rolling-up from the $Bottom$  level, to the same element along different paths,  gives the same result~\cite{KV17}, typical in so-called balanced (or homogeneous) dimensions~\cite{VZ14}.
             
\subsection{The Base Graph and Graphoids}\label{subsec:graphoids}

As a basic data structure for modelling OLAP on graph data,  the concept of \emph{graphoid}  is introduced and  defined
in this section. A graphoid plays the role of a multi-dimensional cuboid in classical OLAP and it  is designed to represent  
the information of the application domain, at a certain level of granularity. 
Essentially, \textit{a graphoid is a node- and edge-labelled directed multi-hypergraph.}

In what follows, a collection of dimensions $D_1,..., D_d$ is assumed in the application domain, and  their schemas $\sigma(D_1),\ab ..., \ab \sigma(D_d)$ are given. Furthermore,  hierarchy instances  $I(\sigma(D_1)),\ab ..., \ab I(\sigma(D_d))$ for all dimensions are given. Finally,  assume that a special dimension $D_0=\mbox{Id}$ is given, to represent  unique identifiers  (Figure~\ref{fig:dimension-schema}(c)).    The notions of  \emph{attributes}, \emph{node types} and \emph{edge types} are defined next. 

\paragraph{Attributes}
The set of attributes $\att$ that describe the data is defined as 
$\att=\{D.\ell\mid D\in\{D_0,D_1,..., D_d\}\mbox{ and } \ab \ell \mbox{ is a level of } D\}.$ 
As described in Section~\ref{subsec:instances}, to each attribute $A$ of $\att$, a \emph{domain} $dom(A)$ is associated, from which the attribute  takes values. 
 
\paragraph{Node types}
Assume a finite, non-empty set $\nn$  of \emph{node types}. Elements of $\nn$ are denoted by a string starting with a hashtag. For example, the node type $\#\mbox{Phone}$ indicates that a node in a graph represents a phone line number.
There are also two functions, $ar$ and $dim$ defined    on $\nn$.
 For  each node type $\#\mbox{n}$ in $\nn$, $ar(\#\mbox{n})$ is a natural number, called the \emph{arity}, 
that expresses the number of attributes  associated with a node of type $\#\mbox{n}$.  Also, $dim(\#\mbox{n})$ is an $ar(\#\mbox{n})$-tuple of attributes, which are 
dimensions defined at the $Bottom$ level, the first of which is the Identifier dimension. 
This means that $dim(\#\mbox{n})$ is an element of $\{\mbox{Id}\}\times \{D_1,..., D_d\}^{ar(\#\mbox{n})-1}$. 
The tuple $dim(\#\mbox{n})$ tells which attributes are associated with a node of type $\#\mbox{n}$, without specifying their levels. 
Finally, assume that $dim(\#\mbox{n})$ contains no repetition, which is the usual case in practice. 
The identifier dimension is always used at its \textit{Bottom} level.

\paragraph{Edge types}
Assume the existence of a finite, non-empty set $\ee$  of \emph{edge types}, which is disjoint from the set $\nn$. Elements of $\ee$ will also be denoted by a string starting with a hashtag. For example, the node type $\#\mbox{{\sf Call}}$ indicates that an edge connects nodes that participate in a call.
Again, also assume the existence of the functions $ar$ and $dim$ on $\ee$.
To each edge type $\#\mbox{e}$ in $\nn$, $ar(\#\mbox{e})$ is a natural number, called the \emph{arity}, that expresses the number of attributes associated with an edge  of type $\#\mbox{e}$. Also, $dim(\#\mbox{e})$ is an 
$ar(\#\mbox{e})$-tuple of attributes, which are 
dimensions (at Bottom-level). 
This means that $dim(\#\mbox{e})$ is an element of $   \{D_0,D_1,..., D_d\}^{ar(\#\mbox{e})}$. 
The tuple $dim(\#\mbox{n})$ expresses which attributes are associated with an edge of type $\#\mbox{e}$, without specifying their levels.  Finally,  assume that $dim(\#\mbox{e})$ contains no repetition. 
The identifier dimension (at its \textit{Bottom}  level) may appear, but is not required.
If the identifier dimension appears, this only occurs once, among the attributes that describe edges of a certain type. 
  
It is now possible to define the notion of \textit{graphoid}.
 
\begin{definition}[Graphoid]
\label{def:graphoid}\rm
Let $D_0=\mbox{Id}$ be the identifier dimension. 
Let dimensions $D_1,..., D_d$ be given with their respective schemas and instances.
Let $\ell_1,..., \ell_d$ be levels for these respective dimensions. A \emph{$(D_1.\ell_1,..., D_d.\ell_d)$-graphoid} (or \emph{graphoid}, for short, if the levels are clear from the context)
is a 6-tuple $G=(N,\tau_N,\lambda_N,$ $E, \tau_E, \lambda_E)$, where 
\begin{itemize}
\item $N$ is a finite, non-empty set, called  the set of \emph{nodes} of $G$;
\item $\tau_N$ is a function from $N$ to $\nn$ (that associates a unique type with each node of $G$);
\item $\lambda_N$ is a function 
that maps a node $n\in N$ to a string $[\#\mbox{n}, a_1,..., a_{ar(\#\mbox{n})}]$, where $\#\mbox{n}=\tau_N(n)$ and, 
if $dim(\#\mbox{n})=(A_1,..., A_{ar(\#\mbox{n})}) $, then, for $i=1,..., ar(\#\mbox{n})$,  
$a_i\in dom(D_j.\ell_j)$, if  $A_i$ is the dimension $D_j$. 
It is assumed that  different $a_1$-values are associated with different nodes, since the first attribute value acts as a node identifier; $\lambda_N$ is denoted the \emph{node labelling function};
\item  $E$ is a subbag\footnote{Let $A$ and $B$ be bags (or sets). 
If the number of occurrences of each element 
$a$ in $A$ is less than or equal to
the number of occurrences of $a$ in $B$, then $A$ is called a \emph{subbag} of $B$, also denoted $A\subseteq B$.} of the set ${\cal P}(N)\times {\cal P}(N)$, which we call the 
set of \emph{(multi hyper-)edges} of $G$;  
\item $\tau_E$ is a function from $E$ to $\ee$ (that associates a unique type to each edge of $G$); and 
\item $\lambda_E$  is a function 
that maps a hyperedge $e\in E$ to a string $[\#\mbox{e}, \ab b_1,\ab ...,\ab  b_{ar(\#\mbox{n})}]$, where $\#\mbox{e}=\tau_E(e)$ and, if $dim(\#\mbox{e})=(B_1,..., B_{ar(\#\mbox{n})}) $, then, for $i=1,..., ar(\#\mbox{e})$,  
$b_i\in dom(D_j.\ell_j)$, if  $B_i$ is the dimension $D_j$; 
  $\lambda_E$ is called the \emph{edge labelling function}.
\qed
\end{itemize}
\end{definition}

The basic graph data that serves as input data to the  graph OLAP process, is called the \emph{base graph}. A base graph plays the role of a multi-dimensional cube in classical OLAP and is designed to contain all the information of the application domain, at the lowest level of granularity.

 \begin{definition}[Base graph]
\label{def:basegraph}\rm
Let dimensions $D_1,..., D_d$ be given with their respective schemas and instances.
The $(D_1.\mbox{Bottom},...,$  $D_d.\mbox{Bottom})$-graphoid  is called the \emph{base graph}.
\qed
\end{definition}

\medskip

 \begin{example}\label{ex:base-graph}\rm
 The running   example  used in this paper is aim\-ed at  analysing calls between customers of phone lines; lines correspond to different operators. Examples~\ref{ex:dimension-schema} and~\ref{ex:instanceGraph}  showed some of the dimensions  used as background information. Next,  the call information is  shown, represented as a graph. The  \textsf{Phone} dimension plays the roles of the calling line and the callee lines (this is called a role-playing dimension in the OLAP literature~\cite{VZ14}). The information in the hyperedges  reflects the total duration of the calls between two or more phone numbers on a given day.
Figure~\ref{fig:fig-phone-as-graph} shows an example of a base graph, where   $N=\{1,2,3,4,5\}$ is the  node set. 
 The nodes in this base graph are all of the same type and represent phones (not persons--a person may have more than one phone). 
In this example, $\nn=\{\#\mbox{{\sf Phone}}\}$. The node type $\#\mbox{{\sf Phone}}$  has arity $2$. Its first attribute is a node identifier and the second one   is a dimensional attribute that represents the phone number, with domain $\{\mbox{Ph}_1, \mbox{Ph}_2, ...\}$. In the example of Figure~\ref{fig:fig-phone-as-graph},  
$$\lambda_N : i\mapsto [\#\mbox{{\sf Phone}},10+i, \mbox{Ph}_i], \mbox{for}~ i=1,...,5.$$

 Hyperedges represent phone calls, which most of the time 
 involve two phones, but which may also involve multiple phones, representing so-called ``group calls.'' 
 So, edges are all of the same type $\#\mbox{{\sf Call}}$ and  $\ee=\{\#\mbox{{\sf Call}}\}$.
 In Figure~\ref{fig:fig-phone-as-graph}, a directed  hyperedge from a subset $S$ of $N$ to a subset $T$ of $N$ is 
 graphically represented by a coloured node which has incoming arrows (of the same colour) from all elements of $S$ and outgoing  
 arrows (again of the same colour) to all elements of $T$. Such a coloured construction is a 
 depiction of the hyperedge $e=(S,T)$, which will be denoted 
 $S\rightarrow T$ from now on.\footnote{The nodes of $S$ are called the \emph{source nodes} of $e$ and the nodes of $T$ are called the \emph{target nodes} of $e$. 
The source and target nodes of $e$ are called \emph{adjacent} to $e$, and   the set of the adjacent nodes to $e$ 
 is denoted by $Adj(e)$. Thus,  $Adj(e)=S\cup T$.}
 For example, the red and purple hyperedges $\{1\}\rightarrow\{2\}$ represent two different phone calls from $\mbox{Ph}_1$ to $\mbox{Ph}_2$,
 made on the same day and of the same duration. This example explains why the model assumes  bags rather than sets. 
  The orange hyperedge $\{3\}\rightarrow\{2,5\}$ represents a group call, from $\mbox{Ph}_3$ to both
 $\mbox{Ph}_2$ and $\mbox{Ph}_5.$
  There are six phone calls shown in the figure. So, $E$ is the bag $\{\!\!\{  \{1\}\rightarrow\{2\}, \{1\}\rightarrow\{2\},
 \{4\}\rightarrow\{3\} , \{4\}\rightarrow\{5\}, \{3\}\rightarrow\{2,5\}, \{5\}\rightarrow\{2,3\}     \}\!\!\}.$
 The edge labelling function $\lambda_E$ associates two attributes, with edges of type $\#\mbox{{\sf Call}}$, namely Date and Duration.
 Date  is a dimensional attribute to which the dimensional  hierarchy in Figure~\ref{fig:dimension-schema}  is associated. Duration is a measure
 attribute (which has as an associated aggregation function, in this case, the summation). %In the figure,  

 \begin{figure}[h]
            \centering
            \includegraphics[scale=0.7]{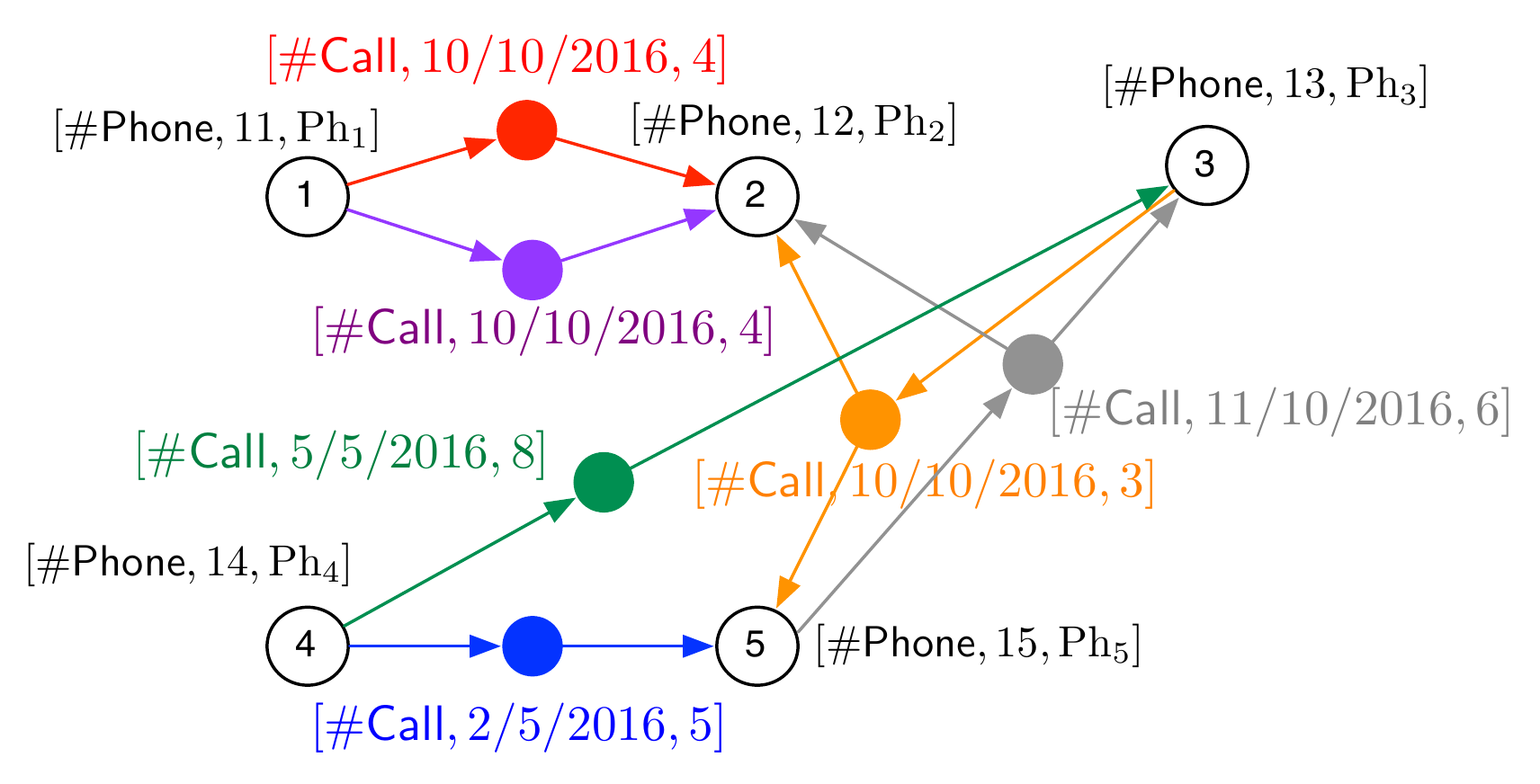}
            \caption{Basic phone call data as a base graph.}
            \label{fig:fig-phone-as-graph}
 \end{figure}  
  \qed
 \end{example}
         
%%%%%
Note that, although the  base graph plays the role of a multi-dimensional cube in classical OLAP
(or a fact table in relational OLAP), a key difference is that this cube has a variable number of ``axes'', since it can represent facts including a variable number of dimensions.
The next example  discusses two  graphoids whose dimensions are at    different levels of granularity.
Later  it will be  explained how these graphoids can be obtained from the base one.
         
 \begin{example} \label{ex:graphoid}\rm
 Continuing with Example~\ref{ex:base-graph}, consider  two available dimensions, namely 
 $D_1=\mbox{Time} $ and $D_2=\mbox{Phone}$. A 
$(\mbox{Time.Day}, \ab \mbox{Phone.Operator})$-graphoid 
 based on the base graph of Figure~\ref{fig:fig-phone-as-graph}, is shown in Figure~\ref{fig:fig-phone-RU-company-alternative}. Here, in the Phone nodes, the phone numbers   have been replaced with their corresponding operator name, at the Phone.Operator level in the dimension $\mbox{Phone}$ (e.g., for $\mbox{Ph}_3$, the corresponding operator is Movistar).

   \begin{figure}[h]
            \centering
            \includegraphics[scale=0.7]{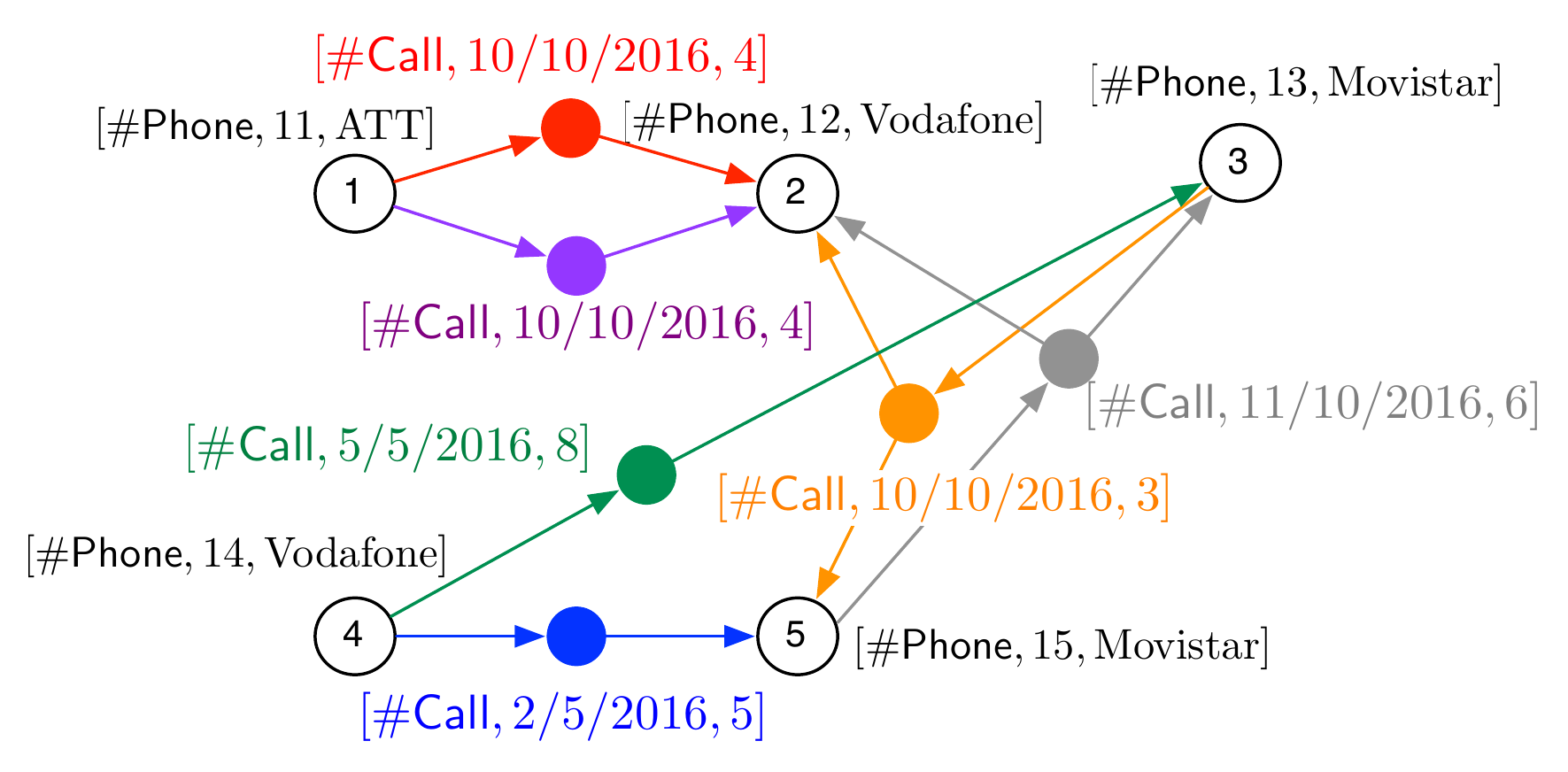}
            \caption{A  $(\mbox{Time.Day}, \mbox{Phone.Operator})$-graphoid, based on the data shown in Figure~\ref{fig:fig-phone-as-graph}.}
            \label{fig:fig-phone-RU-company-alternative}
 \end{figure}

  Figure~\ref{fig:fig-phone-RU-company} shows an alternative $(\mbox{Time.Day}, \mbox{Phone.Operator})$-graphoid  for the data from Figure~\ref{fig:fig-phone-as-graph}. This graphoid has $N=\{1,2,3\}$ as a node set. The nodes with identifiers 12 and 14 represent, respectively,  $\mbox{Ph}_2$ and $\mbox{Ph}_4$ in the base graph (and also in the 
graphoid of Figure \ref{fig:fig-phone-RU-company-alternative}),  which belong to the operator Vodafone. Thus, these two nodes
were collapsed  into one  (with identifier 12)
and similarly, the nodes $\mbox{Ph}_3$ and 
$\mbox{Ph}_5$ were collapsed into one node (with identifier 13). 
These operations were  possible because  these nodes have identical attribute values (apart from the identifier).
 For the dimension $\mbox{Time}$, all information in  Figure~\ref{fig:fig-phone-RU-company}
 is at the level of $\mbox{Day}$ and all information for the dimension $\mbox{Phone}$ is at the level of $\mbox{Company}$. 
 These examples show that there can be more than one $(\mbox{Time.Day},\ab \mbox{Phone.Operator})$-graphoids   ``consistent''   with the given base graph. Thus, some kind of normalization is needed. This is studied in the next section.
  \qed
 \end{example}
 
  \begin{figure}[h]
            \centering
            \includegraphics[scale=0.7]{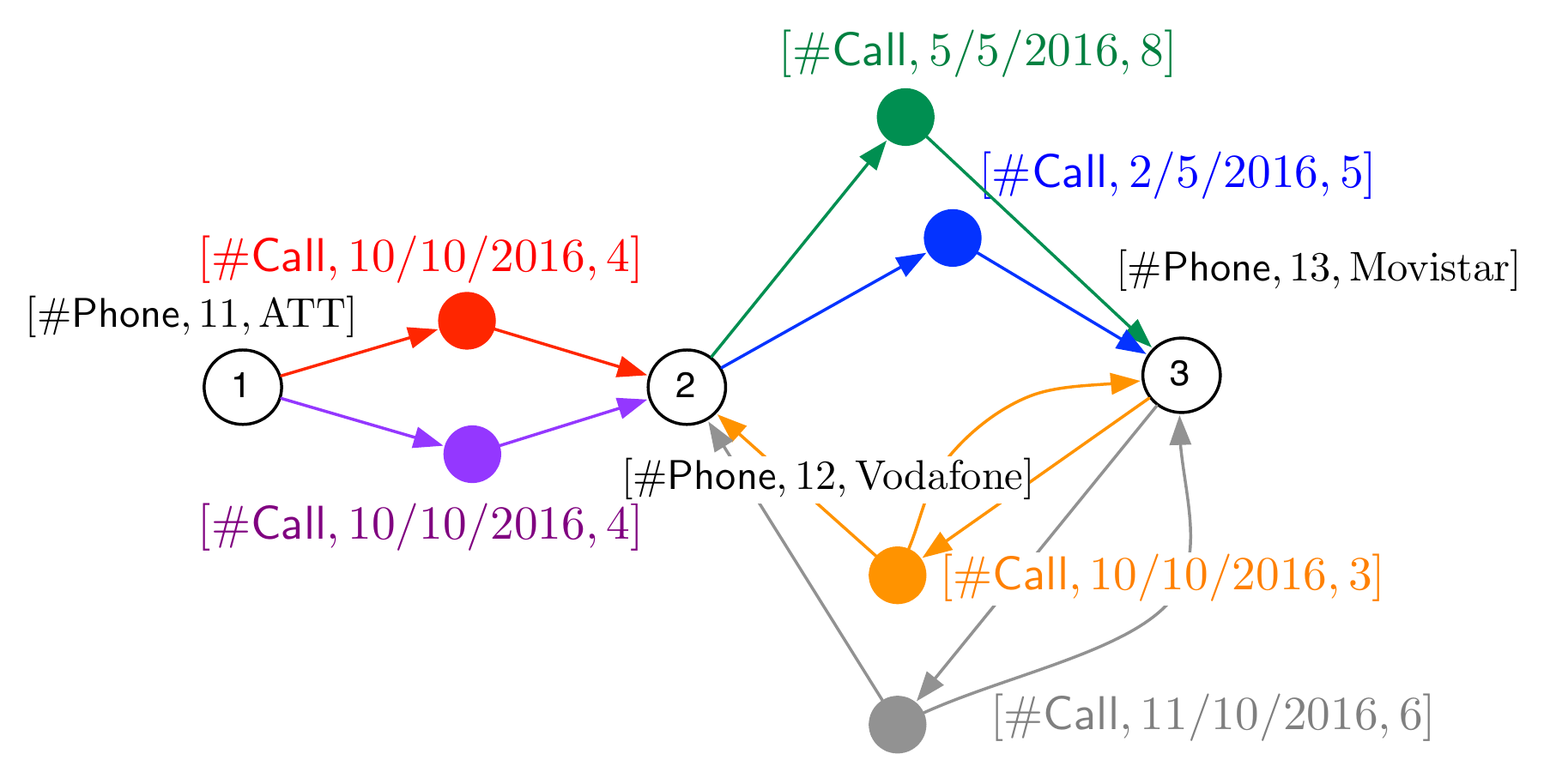}
            \caption{An alternative $(\mbox{Time.Day}, \mbox{Phone.Operator})$-graphoid, based on the data shown in Figure~\ref{fig:fig-phone-as-graph}.}
            \label{fig:fig-phone-RU-company}
 \end{figure}

\begin{remark}\rm\label{remark:conceptual}
Nodes are assumed to represent basic objects in the modelled  application world.  
These objects are given by a number of descriptive attributes. 
Measure  information,  typically present in an OLAP setting  to quantify facts, is, in this  philosophy, represented as attributes on the hyperedges. The call duration is an example of a measure that is placed on edges of the type  $\mbox{\sf Call}$.
However, the above definition also allows for node attributes to be dimensions
that contain measure information. Consider a slightly modified situation in which an object of
 type $\#\mbox{\sf Phone}$ includes an
additional attribute that expresses the average (or expected) billing amount for that particular phone number, for example, 
$[\#\mbox{\sf Phone}, 11, \mbox{Ph}_1, 880]$. In this modified setting, a user may want to compute  
the average expected billing amount over all phone lines. To answer these kinds of   queries,  
attribute values of certain types of  nodes must be averaged (in the example, the $\#\mbox{\sf HasExpectedBill}$ attribute). 
However, in the model presented here, \textit{aggregations are only performed on attribute values of hyperedges.} 
Whenever this problem occurs,  the representation can be modified as 
 illustrated in Figure~\ref{fig:fig-edgify}. On the left-hand side, there is a node that includes
 the $\#\mbox{\sf HasExpectedBill}$ attribute. On the right-hand side, this attribute is brought to the 
$All$ level in its dimension and gets the value $all$. The expected billing information is moved to a new edge of type $\#\mbox{\sf HasExpectedBill}$, where it can be subject to aggregation. The above  operation is called 
the  \emph{edgification} of an attribute $A$ in a node of type $\#\mbox{\sf n}$, and it is  denoted
by $\mbox{\sf Edgify}(\#\mbox{\sf n}, A)$. \qed
  \begin{figure}[h]
            \centering
            \includegraphics[scale=0.7]{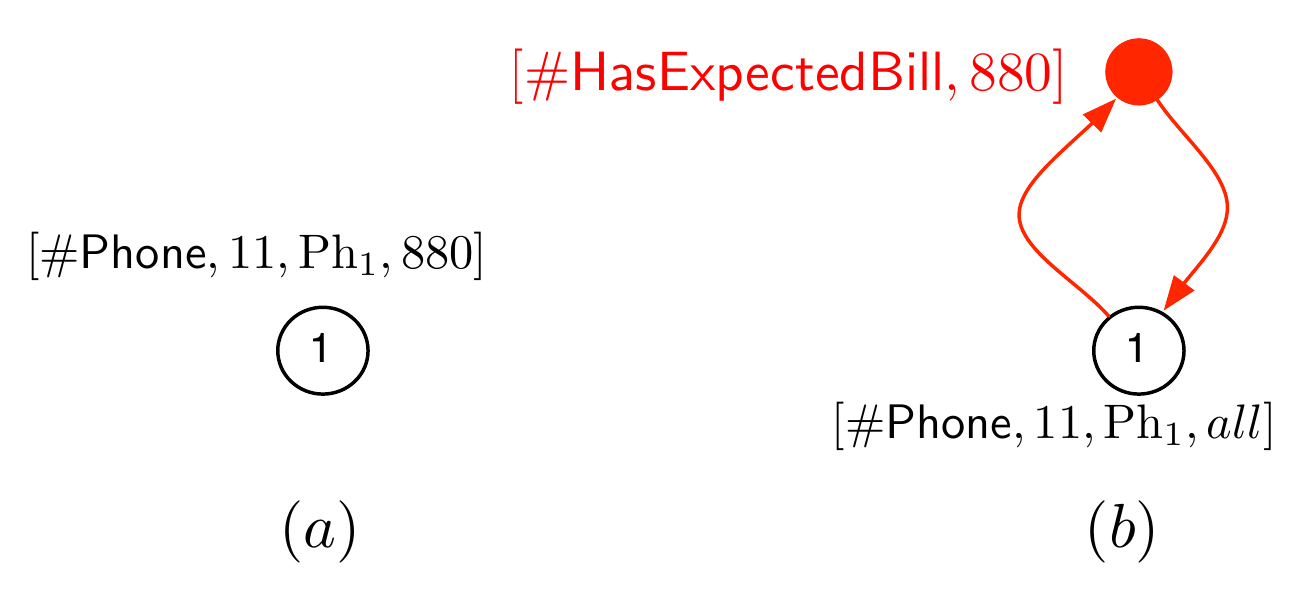}
            \caption{(a) A node with label $[\#\mbox{\sf Phone}, 11, \mbox{Ph}_1, 880]$, where 880 expresses the expected bill.
             (b) An edgification of this node, where the expected billing information is moved to an edge that is  labelled  $\#\mbox{\sf HasExpectedBill}$.}
            \label{fig:fig-edgify}
 \end{figure} 

\ignore{
This is consistent with both possibilities of data acquisition and modelling, namely ETL and ELT, 
commented in Section~\ref{sec:introduction}. In the latter, mostly used in a ``Big Data'' scenario, it is usual to capture data as they come, and then decide whether or not to model these data in a particular way. Edgification is useful in these scenarios. When it is possible 
  to model data prior to exploitation, the choice would most likely be  to represent measures as edges. However, when data are acquired    as they come, it is most likely that measures like the ones above, come as attribute of some node or dimension. Then, the notion of edgification gives the possibility of changing this situation on-the-fly. }
\end{remark}

\subsection{Minimal graphoids }\label{subsec:minimal-graphoids}
 In this section,   the notion of 
\emph{minimal} $(D_1.\ell_1,\ab ..., \ab D_d.\ell_d)$-graphoid is defined. This graphoid is 
obtained collapsing the nodes that have identical labels (apart from the identifier) in the original  graphoid. 
Let $G=(N,\tau_N,\lambda_N, E, \tau_E, \lambda_E)$ be a  
$(D_1.\ell_1,..., D_d.\ell_d)$-graphoid. If the nodes  $n_1, n_2\in N$ 
have identical labels, apart from the identifier, denoted $\lambda_N(n_1)=_{\mbox{\tiny Id}}\lambda_N(n_2)$,  then  these nodes are identified, such that only the one with the  smallest identifier is preserved,
while  the others are deleted. So, if the $\lambda_N$-values of the nodes $n_1,n_2,..., n_k$ 
 pairwise satisfy the $=_{\mbox{\tiny Id}}$-relationship, and 
$n_1$ has the smallest identifier among them, then   the nodes $n_2,..., n_k$ are  replaced  by 
$n_1$ and then deleted.  The expression  $rep_N(n_i)=n_1$, 
for $i=1,2,..., k$, indicates that $n_1$ represents the nodes $n_1,n_2,..., n_k$ in the minimal graph.
All edges leaving from or arriving at the nodes $n_2,..., n_k$ are redirected to $n_1$.
For this purpose,   the function $rep_N$ is defined  on subsets of the node set $N$: if $S\subseteq N$, then $rep_N(S)= \{rep_N(n)\mid n\in S\}$. Now,  the notion of minimal graphoid is defined more formally.

\begin{definition}[Minimal graphoid]
\label{def:minimal-graphoid}\rm
Let $D_0,  D_1,..., D_d$  and $\ell_1,..., \ell_d$ be the same as  in Definition~\ref{def:graphoid}. 
Let  $G=(N,\tau_N,\lambda_N, E, \tau_E, \lambda_E)$ be a $(D_1.\ell_1,..., D_d.\ell_d)$-graphoid.   
The \emph{minimal graphoid of} $G$ is the $(D_1.\ell_1,..., D_d.\ell_d)$-graphoid  
 $G'=(N',\tau_{N'},\lambda_{N'}, E', \tau_{E'}, \lambda_{E'})$, defined as follows:

\begin{itemize}
\item $N'$ is the set $rep_N(N)=\{ rep_N(n)\mid n\in N\}$; 

\item $\tau_{N'}$ is a function from $N'$ to $\nn$, defined as $\tau_{N'}(rep_N(n)):=\tau_N(rep_N(n))$, for each $n$ in $N$;  

\item $\lambda_{N'}$ is a function on $N'$  defined as  $\lambda_{N'}(rep_N(n)):=\lambda_N(rep_N(n))$, for each $n$ in $N$;  
 
\item  $E'$ is a subbag
of the set ${\cal P}(N')\times {\cal P}(N')$, defined as follows: for each hyperedge $e=S\rightarrow T$ in  $E$, then a new hyperedge $rep_N(e):=rep_N(S)\rightarrow rep_N(T)$ is in  $E'$;

\item $\tau_{E'}$ is a function from $E'$ to $\ee$, defined as $\tau_{E'}(rep_N(e)):=\tau_E(e)$, for each $e$ in $E$;  

\item $\lambda_{E'}$ is a function on $E'$ and it is defined as  $\lambda_{E'}(rep_N(e)):=\lambda_E(e)$, for each $e$ in $E$.
 
\qed
\end{itemize}
\end{definition}

\begin{remark} 
The set $N$ of nodes of $G$ is contracted to the set $N'=rep_N(N)$,  therefore each node in $N'$ has the smallest identifier among all nodes that are mapped to $n$ by the $rep_N$-function.  For edges, $E'$ is defined as the bag $\{\!\!\{ rep_N(e)\mid e\in E\}\!\!\}$, which means that for each hyperedge in $E$, there is a corresponding hyperedge in $E'$. This means that the cardinalities of the bags $E$ and $E'$ are the same. \qed
\end{remark}

Proposition~\ref{prop:unique-minimal-graphoid} immediately follows from Definition~\ref{def:minimal-graphoid}.
 
 \begin{proposition}\rm\label{prop:unique-minimal-graphoid}
 For any $(D_1.\ell_1,..., D_d.\ell_d)$-graphoid
 $G=(N,\tau_N,\ab \lambda_N, \ab E, \ab \tau_E, \ab \lambda_E)$, its minimal 
$(D_1.\ell_1,\ab \ab..., D_d.\ell_d)$-graphoid always exists and it is unique.
 \qed
 \end{proposition}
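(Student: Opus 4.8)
The plan is to verify both claims --- existence and uniqueness --- directly from Definition~\ref{def:minimal-graphoid}, showing that the construction is well-defined and deterministic given $G$.

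\textbf{Existence.} First I would argue that the construction in Definition~\ref{def:minimal-graphoid} always produces a legitimate object. The relation $=_{\mbox{\tiny Id}}$ on $N$ is an equivalence relation (reflexivity, symmetry, and transitivity all follow because equality of the label strings ``apart from the identifier'' is itself an equivalence relation). Hence $N$ is partitioned into finitely many equivalence classes, and in each class there is a unique node with the smallest identifier, since identifiers are distinct (by the assumption in Definition~\ref{def:graphoid} that different $a_1$-values are associated with different nodes) and drawn from a set that admits a minimum on finite subsets. This makes $rep_N : N \to N$ a well-defined function, and therefore $rep_N$ extends unambiguously to subsets $S \subseteq N$ and, componentwise, to hyperedges $e = S \to T$. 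One then checks that $G' = (N', \tau_{N'}, \lambda_{N'}, E', \tau_{E'}, \lambda_{E'})$ satisfies the six conditions of Definition~\ref{def:graphoid}: $N' = rep_N(N)$ is finite and non-empty; $\tau_{N'}$ and $\lambda_{N'}$ are well-defined on $N'$ because every element of $N'$ is a fixed point of $rep_N$, so $\tau_{N'}(rep_N(n)) = \tau_N(rep_N(n))$ and similarly for $\lambda_{N'}$ assign exactly one value to each node of $N'$; the labels still respect the type/level constraints because $rep_N(n) \in N$ and $G$ already satisfied them; $E'$ is a subbag of ${\cal P}(N') \times {\cal P}(N')$ with the same cardinality as $E$ (the map $e \mapsto rep_N(e)$ is applied to each element of the bag, preserving multiplicities), and $\tau_{E'}, \lambda_{E'}$ are well-defined by the same argument. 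This shows the minimal graphoid exists.

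\textbf{Uniqueness.} Here I would note that Definition~\ref{def:minimal-graphoid} is not a list of properties the object should satisfy but an explicit recipe: once $G$ is fixed, the equivalence relation $=_{\mbox{\tiny Id}}$ is determined, hence so are its classes, hence so is the choice of minimum-identifier representative in each class, hence $rep_N$ is uniquely determined, and therefore $N'$, $E'$, and all four labelling/typing functions are uniquely determined. So there is nothing to choose; the object is unique essentially by construction. If one instead reads ``minimal graphoid of $G$'' as a specification --- the graphoid obtained from $G$ by collapsing exactly the nodes with identical non-identifier labels --- then uniqueness would follow from the observation that two graphoids produced by this collapsing must have the same node set (the collapse is forced on each equivalence class) and hence identical edge bags and labels.

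\textbf{Anticipated obstacle.} I do not expect a serious obstacle --- the proposition is flagged as following ``immediately'' from the definition, and the content is really just the bookkeeping that the recipe in Definition~\ref{def:minimal-graphoid} is deterministic and type-consistent. The one point that needs a moment's care is the well-definedness of $\tau_{N'}$ and $\lambda_{N'}$ (and their edge analogues): the definitions are phrased as ``$\tau_{N'}(rep_N(n)) := \tau_N(rep_N(n))$ for each $n$ in $N$'', and one should check this does not assign conflicting values when two different $n$'s map to the same representative --- which it does not, precisely because the right-hand side depends only on $rep_N(n)$, not on $n$. A secondary subtlety worth stating explicitly is why collapsing nodes with equal non-identifier labels is harmless for the edge labels: nodes in the same $=_{\mbox{\tiny Id}}$-class carry the same descriptive attribute values, so replacing one by another inside a hyperedge does not change what the hyperedge ``means'', though formally the proposition only asserts existence and uniqueness of the syntactic object, so even this remark is not strictly needed for the proof.
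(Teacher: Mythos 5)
Your proposal is correct and follows the same route as the paper, which offers no explicit proof beyond remarking that the proposition ``immediately follows from Definition~\ref{def:minimal-graphoid}''; your write-up simply makes that observation precise by checking that the construction is deterministic and well-defined. The only point worth noting is that your care over the well-definedness of $\tau_{N'}$ and $\lambda_{N'}$ is exactly the bookkeeping the paper leaves implicit.
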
 
 
 \begin{example} \label{ex:mingraphoid}\rm
 The two
$(\mbox{Time.Day},\ab \mbox{Phone.Operator})$-graphoids 
 shown in Figures~\ref{fig:fig-phone-RU-company-alternative} and~\ref{fig:fig-phone-RU-company}  in  Example~\ref{ex:graphoid}, correspond to the   graph of Figure~\ref{fig:fig-phone-as-graph}.
The graphoid of Figure~\ref{fig:fig-phone-RU-company} is the minimal 
graphoid of Figure~\ref{fig:fig-phone-RU-company-alternative}.
In this example, the original nodes $2$ and $4$ are contracted into one node, namely the node $2$ (since it has the smallest identifier of the two).
Similarly, the original nodes $3$ and $5$ are contracted into  the node $3$.
The original node $1$ remains unchanged. Between  nodes $1$ and $2$, there are two edges (with the same label) in the original graph. They are copied in the minimal graph.
The edges between   nodes $4$ and $3$, and $4$ and $5$, respectively, become two edges between the nodes $2$ and $3$
in the minimal graph. The two hyperedges that involve   nodes $2$, $3$ and $5$ correspond to two hyperedges between the nodes $2$ and $3$ in the minimal graph.
   \qed
 \end{example}
 
%  \begin{definition}\rm\label{def:minimize}
 For any $(D_1.\ell_1,..., D_d.\ell_d)$-graphoid
 $G=(N,\tau_N,\lambda_N, E, \tau_E, \lambda_E)$, 
 the result of the minimisation described in this section is denoted   $\mbox{\sf Minimize}(G)$, and    
 called the \emph{minimisation} of $G$. 
 %\qed
% \end{definition} 
 
 \begin{remark}\rm\label{remark:compute-minimal}
 It is easy to see that the minimal $(D_1.\ell_1,..., D_d.\ell_d)$-graphoid  of  a  $(D_1.\ell_1,..., D_d.\ell_d)$-graphoid
 $G=(N,\tau_N,\lambda_N, E, \tau_E, \lambda_E)$ can be computed, in the worst case, in time that is quadratic
  in $|N|$ and linear in $|E|$. This can be improved,  for instance, with an early pruning of the nodes 
  that will not be contracted.  Addressing this issue is beyond the scope of this paper. \qed
 \end{remark}

\section{OLAP Operations on Graphs}
\label{sec:olap-operations}   

In this section,  the operations that compose the graph-OLAP language  over graphoids are defined.  Section~\ref{sec:classical-olap} will show that these  operations can simulate the typical OLAP operations on cubes.
  
\subsection{Climb}\label{subsec:climb} 

The $\mbox{\sf Climb}$-operation, intuitively,  allows to define graphs at different levels of granularity, based on the background dimensions. 

\begin{definition}[Climb]\rm\label{def:climb}   Assume a $(D_1.\ell_1,..., D_d.\ell_d)$-graphoid $G$ is given as follows:   $G=(N,\tau_N,\lambda_N, E, \tau_E, \lambda_E)$. Let $D_k$ be a dimension that appears in $G$, and $\ell_k$ and $\ell'_k$ be levels in the schema $\sigma(D_k)$ of this dimension, such that $\ell_k\rightarrow \ell'_k$. Also, let $\rho_{\ell_k\rightarrow\ell'_k}$ be the corresponding rollup function (at the instance level). Finally, let $\#\mbox{\sf n}$ be a node type that appears in $G$, and  $\#\mbox{\sf e}$ be an edge type that appears in $G$.

The \emph{node-climb-operation of $G$ along the dimension $D_k$ from level $\ell_k$ to level $\ell'_k$ in all nodes of type 
$\#\mbox{\sf n}$}, denoted $\mbox{\sf Climb}(G, \#\mbox{\sf n}, D_k.(\ell_k\rightarrow\ell'_k))$,  
replaces all 
attribute values $a$ from  $dom(D_k.\ell_k)$ by the  value 
$\rho_{\ell_k\rightarrow\ell'_k}(a)$ from $dom(D_k.\ell'_k)$, in all nodes of $G$ of type $\#\mbox{\sf n}$,  leaving $G$ unaltered otherwise.

The \emph{edge-climb-operation of $G$ along the dimension $D_k$ from level $\ell_k$ to level $\ell'_k$ in all hyperedges of type 
$\#\mbox{\sf e}$}, denoted $\mbox{\sf Climb}(G, \#\mbox{\sf e}, D_k.(\ell_k\rightarrow\ell'_k))$, 
replaces all 
attribute values $a$ from  $dom(D_k.\ell_k)$ by the value 
$\rho_{\ell_k\rightarrow\ell'_k}(a)$ from $dom(D_k.\ell'_k)$, in all edges of $G$ of type $\#\mbox{\sf e}$,  leaving $G$ unaltered otherwise.
\qed
\end{definition}

\begin{example}\rm\label{ex:climb}
Applying to the graphoid $G$ depicted in Figure~\ref{fig:fig-phone-as-graph} the operation $\mbox{\sf Climb}(G, \#\mbox{\sf Phone}, \mbox{Phone}.(\mbox{Phone}\rightarrow \mbox{Operator}))$, results in the graphoid shown inFigure~\ref{fig:fig-phone-RU-company-alternative}.\qed
\end{example}

\begin{remark}\rm\label{remark:climb}
If a dimension $D_k$  appears in multiple node types and edge types,  to apply the 
$\mbox{\sf Climb}$-operation on many of them, the shorthand expression
$\mbox{\sf Climb}(G, \{\#\mbox{\sf n}_1, \ab ...,\ab  \#\mbox{\sf n}_r, \ab \#\mbox{\sf e}_1,\ab ..., \ab \#\mbox{\sf e}_s\}, \ab D_k.(\ell_k\rightarrow\ell'_k))$ can be used.  
Finally, $\mbox{\sf Climb}(G, \ast , D_k.(\ell_k\rightarrow\ell'_k))$  denotes a  climbing, in the dimension $D_k$, from level $\ell_k$ to level $\ell'_k$ in all possible node and edge types.
\qed
\end{remark}

\subsection{Grouping}\label{subsec:grouping}   

The $\mbox{\sf Group}$-operation, both on nodes and on edges, is defined in this section.

\begin{definition}[Grouping]\rm\label{def:grouping}  Assume a $(D_1.\ell_1,..., D_d.\ell_d)$-graphoid $G$ is given as follows:   $G=(N,\tau_N,\lambda_N, E, \tau_E, \lambda_E)$.  Let $D_k$ be a dimension that appears in $G$ and let $\ell_k$ and $\ell'_k$ be levels in the schema $\sigma(D_k)$ of this dimension, such that $\ell_k\rightarrow \ell'_k$. Let $\rho_{\ell_k\rightarrow\ell'_k}$ be the corresponding rollup function. Let $\#\mbox{\sf n}$ be a node type that appears in $G$ and let  $\#\mbox{\sf e}$ be an edge type that
 appears in $G$.

The \emph{node-grouping of $G$ along the dimension $D_k$ from level $\ell_k$ to level $\ell'_k$ in all nodes of type 
$\#\mbox{\sf n}$}, denoted $\mbox{\sf Group}(G, \#\mbox{\sf n}, D_k.(\ell_k\rightarrow\ell'_k))$, 
is defined as $\mbox{\sf Minimize}(\mbox{\sf Climb}(G, \#\mbox{\sf n}, D_k.(\ell_k\rightarrow\ell'_k)))$.

The \emph{edge-grouping of $G$ along the dimension $D_k$ from level $\ell_k$ to level $\ell'_k$ in all hyperedges of type 
$\#\mbox{\sf e}$}, denoted $\mbox{\sf Group}(G, \#\mbox{\sf e}, D_k.(\ell_k\rightarrow\ell'_k))$, 
is defined as $\mbox{\sf Climb}(G, \#\mbox{\sf n}, D_k.(\ell_k\rightarrow\ell'_k))$.
\qed
\end{definition}

\begin{example}\rm\label{ex:group}
Applying to the graphoid $G$  depicted in Figure~\ref{fig:fig-phone-RU-company-alternative} the operation $\mbox{\sf Group}(G, \#\mbox{\sf Phone}, \mbox{Phone}.(\mbox{Phone}\rightarrow \mbox{Operator}))$, results in  the graphoid, depicted in 
Figure~\ref{fig:fig-phone-RU-company}.\qed
\end{example}

\subsection{Aggregate}\label{subsec:aggregation}   

In this section,   the $\mbox{\sf Aggr}$-operation on measures stored in edges is defined.

\begin{definition}[Aggregate]\rm\label{def:aggregate}  Given a minimal $(D_1.\ell_1,..., D_d.\ell_d)$-graphoid $G$ defined as 
 $G=(N,\tau_N,\lambda_N, E, \tau_E, \lambda_E)$, let $D_k$ be a dimension that appears in the hyperedges of $G$ of type $\#\mbox{\sf e},$ that plays the role of a measure, to which the aggregate function $F_k$ can be applied.
The  \emph{aggregation of the graphoid $G$ over the dimension $D_k$ (using the function $F_k$)}, denoted 
$\mbox{\sf Aggr}(G, \#\mbox{\sf e},  $ $D_k, F_k)$,   results in a graphoid $G'$ over the same $N, \tau_N$ and $\lambda_N$ as $G$, with the following modified hyperedge bag $E'$. If the hyperedges $e_1,e_2, ..., e_r$ are all of type $\#\mbox{\sf e}$ and all of type
$S\rightarrow T$ (and if they are the only ones), and if $\lambda_E$ agrees on all of them apart from a possible identifier-attribute, and apart from the dimension $D_k$, then the hyperedges $e_1,e_2, ..., e_r$ are replaced by one of them (say $e_1$) of the same type and with the same attribute values, apart from the identifier, which is the identifier of $e_1$, and the value of the attribute $D_k.\ell_k$, which becomes 
the value of the aggregation function $F_k$ applied to the values of the attribute $D_k.\ell_k$ of the edges $e_1,e_2, ..., e_r$.
 \qed
\end{definition}

\begin{example}\rm\label{ex:aggregate}
Applying the operation 
  $\mbox{\sf Aggr}(G, \#\mbox{\sf Call}, \mbox{Duration}, \mbox{\sc Sum})$ to the graphoid $G$, depicted in Figure~\ref{fig:fig-phone-RU-company},
 results in a graphoid where the two edges that connect the nodes $1$ and $2$ are replaced by one edge with label
  $[ \#\mbox{\sf Call}, 10/10/2016, 8]$, which contains, in the measure attribute,  the sum of the two durations.
   \qed
\end{example}

\begin{remark}\rm\label{remark:aggregate}
To aggregate multiple dimensions $M_1,\ab ..., \ab M_k$, using the aggregate functions  $F_1,\ab ...,\ab  F_k$  simultaneously, 
 the notation would be: $\mbox{\sf Aggr}(G, \ab \#\mbox{\sf e}, \ab \{M_1,..., M_k\}, \ab  \{F_1,..., F_k\})$.
 Also,  for simplicity, only the typical SQL aggregate functions {\sc Sum}, {\sc Max},{\sc Min} and {\sc Count} are considered.
\qed
\end{remark}

\begin{remark}\rm\label{remark:climb-group-aggr}
Although the operations $\mbox{\sf Climb}$, $\mbox{\sf Group}$, and $\mbox{\sf Aggr}$, are not present in classic relational OLAP, they are included here for several reasons: first, they can be useful when operating on graphs in practice; second, they facilitate and make it simple the definition of the Roll-up operation, that otherwise could be unnecessarily difficult to express. 
\qed
\end{remark}
 \subsection{Roll-Up}\label{subsec:rollup}   
 
 The operations defined above allow defining  
 the $\mbox{\sf Roll-Up}$-operation over dimensions and measures
  stored in edges, as explained  next. 

\begin{definition}[Roll-Up]\rm\label{def:rollup}   Assume a $(D_1.\ell_1,..., D_d.\ell_d)$-graphoid $G$ is given as follows:   $G=(N,\tau_N,\lambda_N, E, \tau_E, \lambda_E)$. 
Let $D_c$ be a dimension that appears in some nodes and/or hyperedges of $G$, that   plays the role of a climbing dimension.
Let $M_1,..., M_k$ be dimensions that appear in the hyperedges of type $\#\mbox{\sf e}$  of $G$. These dimensions play the role of measure dimensions, and it is  assumed that  aggregate functions $F_1,..., F_k$ are associated with them.
Let $\#\mbox{\sf n}_1,..., \#\mbox{\sf n}_r$ be node types appearing in $G$, and let $\#\mbox{\sf e}_1,..., \#\mbox{\sf e}_s$ be hyperedge 
types appearing in $G$.
The \emph{roll-up of $G$ over the dimensions $M_1,..., M_k$ (using the functions  $F_1,..., F_k$) in hyperedges of type
$\#\mbox{\sf e}$, and over 
the  climbing dimension $D_c$ from level $\ell_c$  to level $\ell'_c$ in nodes of types $\#\mbox{\sf n}_1,..., \#\mbox{\sf n}_r$ and edges of types $\#\mbox{\sf e}_1,...,\#\mbox{\sf e}_s$},  denoted $$\mbox{\sf Roll-Up}(G,\{\#\mbox{\sf n}_1,...,\#\mbox{\sf n}_r, \#\mbox{\sf e}_1,...,\#\mbox{\sf e}_s\}, D_c.(\ell_c\rightarrow\ell'_c);\#\mbox{\sf e},M_1,...,M_k,F_1,...,F_k),$$  is defined as 

$$\displaylines{\quad\mbox{\sf Aggr}(\mbox{\sf Minimize}(\mbox{\sf Climb}(G, \{\#\mbox{\sf n}_1,..., \#\mbox{\sf n}_r, \#\mbox{\sf e}_1,..., \#\mbox{\sf e}_s\},  \hfill{} \cr \hfill{}    D_c.(\ell_c\rightarrow\ell'_c))),\#\mbox{\sf e}, M_1,..., M_k, F_1,..., F_k).\quad}$$
 \qed
\end{definition}
  
\begin{example}\rm\label{ex:roll-up}
Applying   to the graphoid depicted in Figure~\ref{fig:fig-phone-RU-company} the operation 
$\mbox{\sf Roll-Up}(G, \{ \#\mbox{\sf Call}\}, \ab \mbox{Time}.(\mbox{Day}\rightarrow\mbox{Year}); \ab \#\mbox{\sf Call},\ab \mbox{Duration}, \ab \mbox{\sc Sum})$,
 results in the graphoid of Figure~\ref{fig:fig-phone-RU-company+Agg}.
 The minimisation step in the above implementation of the roll-up operation does nothing, in this case, since the operation is applied to a 
 minimal graphoid. \qed
\end{example}

   \begin{figure}[h]
            \centering
            \includegraphics[scale=0.7]{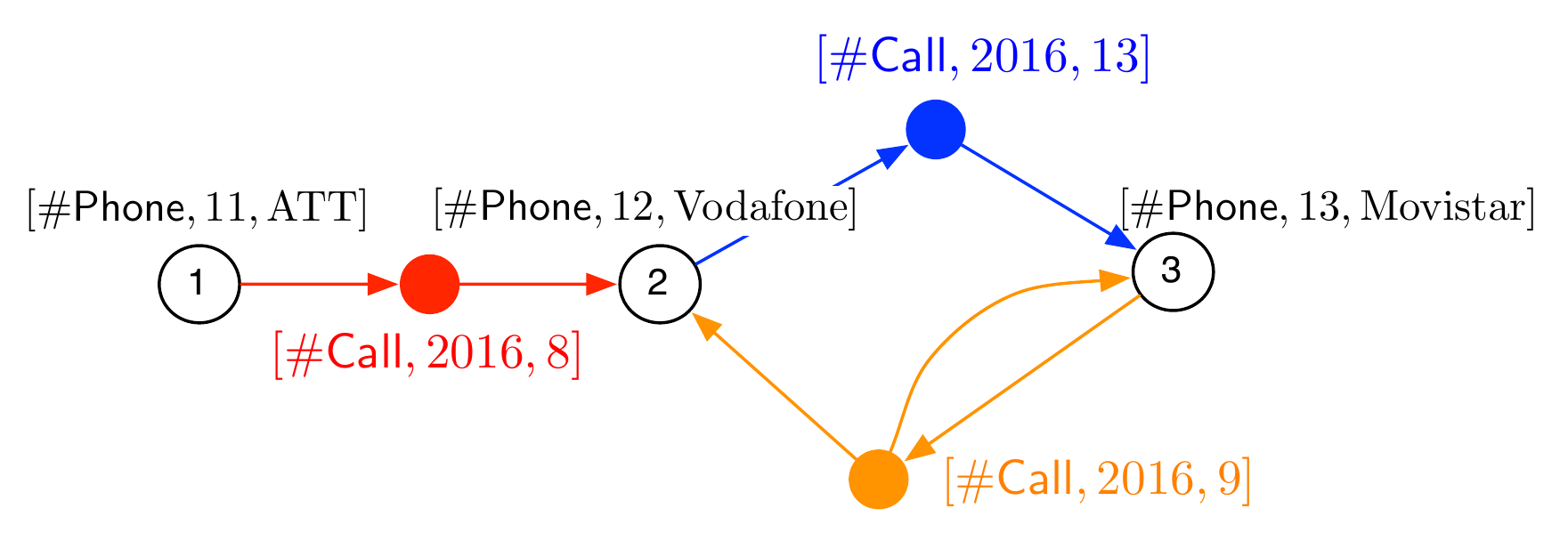}
            \caption{The result of the operation 
$\mbox{\sf Roll-Up}(G, \ab\{ \#\mbox{\sf Phone}\}, \ab \mbox{Time}.(\mbox{Day}\rightarrow\mbox{Year}); \ab \#\mbox{\sf Call},  \mbox{Duration}, \mbox{\sc Sum})$ applied to the  graphoid of Figure~\ref{fig:fig-phone-RU-company}.}
            \label{fig:fig-phone-RU-company+Agg}
 \end{figure}  

  \begin{remark}\rm\label{remark:roll-up}
 To apply the climbing in the roll-up operation 
 to the nodes and edges of all possible types,  the shorthand ``$\ast$'' is used as follows: 
 $\mbox{\sf Roll-Up}(G,\ab \ast, \ab D_c.(\ell_c\rightarrow\ell'_c);\ab \#\mbox{\sf e}, \ab M_1,..., \ab M_k, \ab F_1,..., \ab F_k)$. To  aggregate over all edge types, the notation is 
  $\mbox{\sf Roll-Up}(G,\ast, D_c.(\ell_c\rightarrow\ell'_c); \ast, \ab M_1,..., M_k, \ab F_1,...,  F_k).$
\qed
\end{remark}

 \subsection{Drill-Down}\rm\label{sec:drill-down}
The  $\mbox{\sf Drill-Down}$-operation  does the opposite of    $\mbox{\sf Roll-Up}$,\footnote{Actually, this is true for a sequence of roll-up and drill-down operations such that there are  no slicing or dicing operations (explained in Sections ~\ref{subsec:dice} and ~\ref{subsec:slice})   in-between. However, for the sake of simplicity, and without loss of generality,  in this paper it is assumed that roll-up and drill-down are the inverse of each other.} taking a graphoid to a finer granularity level, along a dimension $D_d$, call it a \textit{descending} dimension, and also operating over a collection of measures, using the same aggregate functions associated with such measures. Note also that, descending from a level $\ell_d$ down to a level  $\ell'_d$ along a dimension $D_d$
is equivalent to climbing from the bottom level of $D_d$, $D_d.\mbox{Bottom}$, to  the level  $\ell'_d$ along   $D_d$. Thus,  the \emph{drill-down of $G$ over the dimensions $M_1,..., M_k$ (using the functions  $F_1,..., F_k$) in hyperedges of type $\#\mbox{\sf e}$, and over 
the  descending dimension $D_d$ from level $\ell_d$  to level $\ell'_d$ in nodes of types $\#\mbox{\sf n}_1,..., \#\mbox{\sf n}_r$ and edges of types $\#\mbox{\sf e}_1,..., \#\mbox{\sf e}_s$},  denoted 
$$\displaylines{\quad \mbox{\sf Drill-Down}(G, \{ \#\mbox{\sf n}_1,..., \#\mbox{\sf n}_r, \#\mbox{\sf e}_1,..., \#\mbox{\sf e}_s\}, \hfill{} \cr \hfill{} D_d.(\ell_d\rightarrow\ell'_d);
\#\mbox{\sf e}, M_1,..., M_k, F_1,..., F_k),\quad}$$  is defined as 

$$\displaylines{\quad\mbox{\sf Aggr}(\mbox{\sf Minimize}(\mbox{\sf Climb}(G, \{\#\mbox{\sf n}_1,..., \#\mbox{\sf n}_r, \#\mbox{\sf e}_1,..., \#\mbox{\sf e}_s\},  \hfill{} \cr \hfill{}  D_d.(\mbox{Bottom}\rightarrow\ell'_d))),\#\mbox{\sf e}, M_1,..., M_k, F_1,..., F_k).\quad}$$

Given the above, in what follows  the discussion  is limited 
to the  $\mbox{\sf Roll-Up}$-operation.

 \subsection{Dice}\label{subsec:dice}   
The $\mbox{\sf Dice}$-operation over a graphoid, produces a subgraphoid that satisfies a 
 Boolean condition $\varphi$ over the available dimension levels. A ``strong'' version is also defined, called the $\mbox{\sf s-Dice}$-operation. 
 In this context, $\varphi$ is a Boolean combination of
 atomic conditions of the form $D.\ell<c$, $D.\ell=c$, and $D.\ell>c$, where $D$ is a dimension, $\ell$ is a level in that dimension, and $c\in  dom(D.\ell)$.
 The expression $\varphi$ can be written in disjunctive normal form as $$\bigvee_k\bigwedge_l \varphi_{kl},$$ where all $\varphi_{kl}$ are atomic conditions.
 
 Before giving the definition of the $\mbox{\sf Dice}$-operation, it must be explained what does it mean that   a hyperedge $e$ in a graphoid \emph{satisfies} $\varphi$, denoted $e\models \varphi$. For this,  interpreting conjunction and disjunction in  
 the usual way, it suffices to define $e\models \varphi_{kl}$ for the atomic formulas that appear in  $\varphi$.   Thus, 
   $\varphi_{kl}$ cannot be evaluated in $e$ if the label of $e$ does not contain 
 information on dimension $D$ at level $\ell$. Otherwise, $\varphi_{kl}$ can be evaluated in $e.$ 
  Let  $\varphi_{kl}$ be $D.\ell<c$, $D.\ell=c$ or $D.\ell>c$;  $\varphi_{kl}$ is \emph{not false} in $e$ 
 if it can be evaluated in $e$ and is true, or if it cannot be evaluated in $e$.  The notion of $\varphi_{kl}$ being \emph{not 
 false} in a node $n$  adjacent to $e$ (that is, $n\in Adj(e)$) is defined analogously.
 Finally,   $e\models \varphi_{kl}$ if $\varphi_{kl}$ is not false in $e$ and not false in all $n\in Adj(e)$.
 
  \begin{definition}[Dice]\rm\label{def:dice}    Assume a $(D_1.\ell_1,..., D_d.\ell_d)$-graphoid $G$ is given as  $G=(N,\tau_N,\lambda_N, E, \tau_E, \lambda_E)$.   
Let $\varphi$ be a Boolean combination of equality and inequality constraints that involve, on the one hand
dimension levels $\ell'_1,..., \ell'_d$ 
(equal or higher than $\ell_1,..., \ell_d$ in the dimension schemas $\sigma(D_1),\ab ..., \ab \sigma(D_d)$, respectively), and on the other hand, constants from $dom(D_1.\ell'_1),..., dom(D_d.\ell'_d)$.  The \emph{dice over $G$ on the condition $\varphi$}, denoted  
$\mbox{\sf Dice}(G, \varphi),$ produces a subgraphoid of $G$, whose nodes are the nodes of $G$ and whose edges satisfy the conditions expressed by $\varphi$. When  an hyperedge  does not satisfy $\varphi$,
the whole hyperedge is deleted from the graph and thus, it does  not belong to  $\mbox{\sf Dice}(G, \varphi).$ All other edges of $G$ belong to $\mbox{\sf Dice}(G, \varphi).$  
If two edges in $G$ have the same set of adjacent nodes   and one of them is deleted from $G$ in $\mbox{\sf Dice}(G, \varphi),$ then both of them are deleted in $G$ to obtain the \emph{strong dice over $G$ on the condition $\varphi$}, denoted $\mbox{\sf s-Dice}(G, \varphi).$
  \qed
\end{definition}

\ignore{
\begin{example}\rm\label{ex:dice}
Applying  the operation 
$\mbox{\sf Dice}(G, \mbox{Phone}.\mbox{Operator}\not=\mbox{ATT})$ to the graphoid depicted in Figure~\ref{fig:fig-phone-RU-company}, results in the graphoid of Figure~\ref{fig:fig-phone-RU-company+Dice}(a). 
In this case,  the  result would be the same as the one obtained after  applying  $\mbox{\sf s-Dice}(G, \mbox{Phone}.\mbox{Operator}\not=\mbox{ATT})$.
 Applying $\mbox{\sf Dice}(G, \mbox{Time}.\mbox{Month}=5/2016)$  over the  graphoid in Figure~\ref{fig:fig-phone-RU-company+Dice}(a),  results in the  graphoid shown in Figure~\ref{fig:fig-phone-RU-company+Dice}(b).
\qed
\end{example}

   \begin{figure}[h]
            \centering
            \includegraphics[scale=0.6]{fig-phone-RU-company+Dice}
            \caption{(a) The result of the operation 
$\mbox{\sf Dice}(G, \mbox{Phone}.\mbox{Operator}\not =\mbox{ATT})$ to the graphoid depicted in Figure~\ref{fig:fig-phone-RU-company}; 
 (b)  The result of applying $\mbox{\sf Dice}(G, \mbox{Time}.\mbox{Month}=5/2016)$ to (a).}
            \label{fig:fig-phone-RU-company+Dice}
 \end{figure}  
}
 
 \subsection{Slice}\label{subsec:slice}    
  Intuitively, the 
$\mbox{\sf Slice}$ operation    eliminates the references to a dimension in a graphoid.  The formal definition follows.
 
\begin{definition}[Slice]\rm\label{def:slice}   Assume a $(D_1.\ell_1,..., D_d.\ell_d)$-graphoid $G$ is given as  $G=(N,\tau_N,\lambda_N, E, \tau_E, \lambda_E)$.   
Let $D_s$ be a dimension that appears in some nodes and/or hyperedges of $G$.  
Let $M_1,..., M_k$ be dimensions that appear in the hyperedges of $G$. These dimensions play the role of measure dimensions. It is assumed that  aggregate functions $F_1,..., F_k$ are associated with them.
The \emph{slice of the dimension $D_s$ from $G$ over the dimensions $M_1,..., M_k$ (using the functions  $F_1,..., F_k$)},  denoted $\mbox{\sf Slice}(G, D_s; M_1,..., M_k, F_1,..., F_k),$  is defined as the roll-up operation up to the level  $D_s.All$ over the dimensions $M_1,..., M_k$ (using the functions  $F_1,..., F_k$). Formally, this slice operation is defined as 
$\mbox{\sf Roll-Up}(G, \ast, D_s.(\ell_s\rightarrow All); \ast,  M_1,..., M_k, F_1,..., F_k). \quad$ \qed
\end{definition}
 
 \ignore{
\begin{example}\rm\label{ex:slice}
Applying the operation $\mbox{\sf Slice}(G,  \mbox{Time}; \mbox{Duration}, \mbox{\sc Sum})$ to the graphoid depicted
 in Figure~\ref{fig:fig-phone-RU-company},
results in  the graphoid of Figure~\ref{fig:fig-phone-RU-company+Slice}.  \qed
% If, instead,   the operation 
%$\mbox{\sf Slice}(G,  \mbox{Phone}; \mbox{Duration}, \mbox{\sc Sum})$ is applied to the graphoid %depicted in
% Figure~\ref{fig:fig-phone-RU-company}, the result would be   the graphoid of Figure~\ref{fig:fig-phone-%RU-company+Slice} (b). 
\end{example}

   \begin{figure}[h]
            \centering
            \includegraphics[scale=0.7]{fig-phone-RU-company+Slice.pdf}
            \caption{The result of the operation 
$\mbox{\sf Slice}(G,  \mbox{Time}, \mbox{Duration}, \mbox{\sc Sum})$ on the graphoid of Figure~\ref{fig:fig-phone-RU-company}.}
%; (b) The result of the operation 
%$\mbox{\sf Slice}(G,  \mbox{Phone}, \mbox{Duration}, \mbox{\sc Sum})$ on the graphoid of Figure~%\ref{fig:fig-phone-RU-company}

 \label{fig:fig-phone-RU-company+Slice}
 \end{figure}  
}

 \subsection{Node-delete}\label{subsec:node-delete}   
 The $\mbox{\sf n-Delete}$-operation over a graphoid,  deletes 
 all nodes of a certain type and delete, in the source and 
 target set of all edges, the nodes of this type. Again, although 
 this operation is not present in classic OLAP, it is needed 
 to simulate the classic OLAP slice operation, as 
 will become clear in  Section~\ref{sec:olap-graph-equiv}.
 
 \begin{definition}[Node-delete]\rm\label{def:node-delete}  Assume a $(D_1.\ell_1,..., D_d.\ell_d)$-graphoid $G$ is given as  $G=(N,\tau_N,\lambda_N, E, \tau_E, \lambda_E)$.  
 Given a node type $\#\mbox{\sf n}$, the \emph{node-delete  over $G$} operation, denoted  
$\mbox{\sf n-Delete}(G, \#\mbox{\sf n}),$ produces a subgraphoid of $G$, 
whose nodes of type $\#\mbox{\sf n}$ are deleted, and such that all edges $e=S\rightarrow T$ are replaced by edges $S^{\#\mbox{\sf n}}\rightarrow T^{\#\mbox{\sf n}}$, where $S^{\#\mbox{\sf n}}$ and $T^{\#\mbox{\sf n}}$ are $S$ and $T$, respectively, minus the nodes of type
${\#\mbox{\sf n}}$. The edges remain of the same type and they keep the same label. 
  \qed
\end{definition}

\begin{example}\rm\label{ex:node-delete}
When a graphoid contains only nodes of one type, as in Figure~\ref{fig:fig-phone-as-graph}, the result of the deletion of a node  is, obviously, the empty graph. In the graphoid of Figure~\ref{fig:fig-star-flower} (explained later), the result of $\mbox{\sf n-Delete}(G, \#\mbox{\sf Location})$ would be a graph with  nodes 2 and 3, where  a hyperedge containing only these nodes would remain, with label  $[\mbox{\sf \#Sales}, 10]$.  
\qed
\end{example}

\section{Classical OLAP Cubes as a Special Case of OLAP Graphs}\label{sec:classical-olap}

This section explains how the classical cube-based OLAP model can be represented  in the graphoid OLAP model.
It is also shown that the classical OLAP-operations $\mbox{\sf Roll-Up}$, $\mbox{\sf Drill-Down}$, $\mbox{\sf Slice}$ and $\mbox{\sf Dice}$ can be simulated by the graphoid OLAP-operations defined in Section~\ref{sec:olap-operations}.

\subsection{A Discussion on Modelling Cubes as Graphoids}

Figure~\ref{fig:ex-cube} illustrates a  typical example  of an OLAP cube with dimensions $(D_1,\ab D_2,\ab D_3)\ab =\ab (Product, \ Location,$ 
$\  Time).$  The cube represents sales amounts of products at certain stores locations (cities) on certain dates (at the lowest level of granularity). 
There are several ways for representing this cube in the graphoid model.  
Figure~\ref{fig:fig-star-flower} shows two  ways of  modelling  
the fact $(Lego, Antwerp, 1/1/2014; 10)$,  which expresses that the sales of Lego in the 
Antwerp store on January 1st, 2014 amount to 10.

 \begin{figure}[thb]
\centering   
  \includegraphics[scale=1]{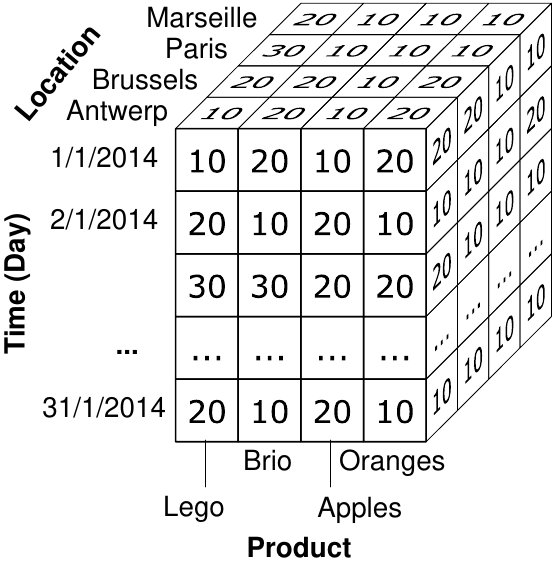}
\caption{An example of a {\sf Sales} data cube with one measure: $\mu_1=sales$.}\label{fig:ex-cube}
\end{figure}

 Figure~\ref{fig:fig-star-flower}(a) shows nodes 1, 2 and 3,  of types 
$\#\mbox{\sf Product}$, $\#\mbox{\sf Location}$ and  $\#\mbox{\sf Time}$, respectively. All  of them have only one attribute, to store the values  $Lego$, $Antwerp$ and $1/1/2014$, call those attributes  $\mbox{ProductVal}$, $\mbox{LocationVal}$ and   $\mbox{TimeVal}$, respectively.  Further, those attributes are dimensions, with an appropriate dimension schema.  The measure information is stored in the  hyperedge $\emptyset\rightarrow \{1,2,3\}$ with label $[\mbox{\sf \#Sales}, 10]$, which has one attribute, namely $\mbox{SalesVal}$, to store the sale amount (10, in this case). Thus, in this approach, each cell of a data cube is modelled by a ``star''-shaped hyperedge.

A more compact representation is shown in Figure~\ref{fig:fig-star-flower}(b).
Here,  there is only one node, of type $\#\mbox{\sf Cube}$ in the graphoid, which represents the data cube. 
This node is labelled $[\#\mbox{\sf Cube},11]$, and has no attribute values (apart from an identifier value).
Cell-coordinates and cell-content are stored in
hyperedges that form loops around  the  node. The fact $(Lego, Antwerp, \mbox{\textit{1/1/2014}}; 10)$ is modelled by a unique hyperedge with label
$[\mbox{\sf \#InCube}, \ab \mbox{Lego}, \ab \mbox{Antwerp}, \ab \mbox{1/1/2014}, \ab 10]$.
Thus, cube facts are represented by a hyperedge of type $\#\mbox{\sf InCube}$ that has four attributes: 
$\mbox{ProductVal}$, $\mbox{LocationVal}$, 
$\mbox{TimeVal}$ and $\mbox{SalesVal}$.

   \begin{figure}[h]
            \centering
            \includegraphics[scale=0.6]{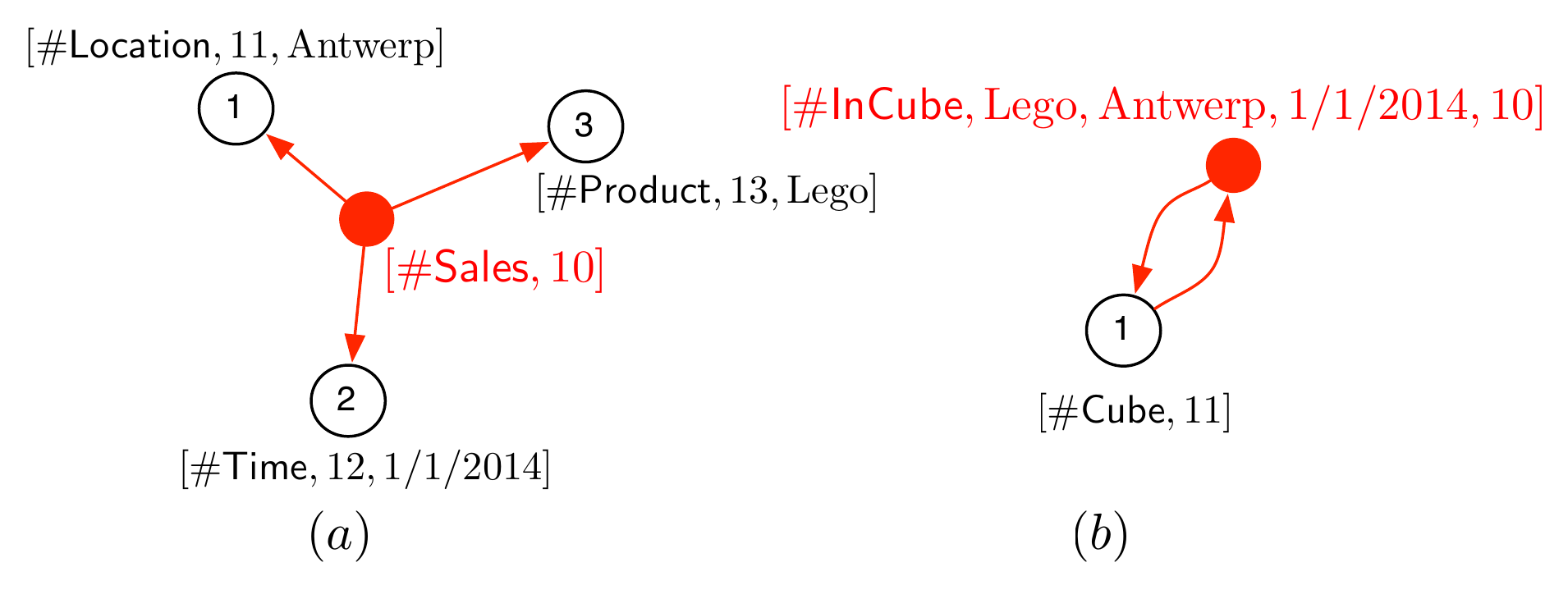}
            \caption{Star-representation of the fact $(Lego, Antwerp, 1/1/2014; 10)$ (a).   Petal-representation of the fact $(Lego, Antwerp, 1/1/2014; 10)$ (b).  }
            \label{fig:fig-star-flower}
 \end{figure}

In between the two  alternatives above, there are, obviously, more modelling possibilities. The next section will show that the graphoid OLAP-operations  presented in Section~\ref{sec:olap-operations}, 
are at least as  powerful as the classical OLAP-operations of the classical   cube model.
The proof will assume  the star-representation of data cubes in the graphoid model (Figure~\ref{fig:fig-star-flower}(a)).

\subsection{Graph- and Classic- OLAP Operations Equivalence}
\label{sec:olap-graph-equiv}
\ignore{ A classical data cube is based on dimensions $D_1,..., D_d$ and on measures $\mu_1,..., \mu_m$.
Each dimension $D_i$ has its domain $dom(D_i)$ and its dimension schema $\sigma(D_i)$ with corresponding dimension instances and roll-up functions (between the levels). Each measure $\mu_j$ has its domain $dom(\mu_j)$ and an associated aggregation function $f_j$.
}
A \emph{(classical) data cube $C$ over   dimensions $D_1,..., D_d$ with  measures $\mu_1,..., \mu_m$}, can then be seen as a   partial  function
$\mu: dom(D_1)\times\cdots\times dom(D_d)\rightarrow dom(\mu_1)\times\cdots\times dom(\mu_m).$ This function maps each ``cell'' of the cube to $m$ values for the measures. A cell of the cube with coordinates $(a_1,..., a_d)\in dom(D_1)\times\cdots\times dom(D_d)$,
that contains values $(c_1,..., c_m)\in dom(\mu_1)\times\cdots\times dom(\mu_m)$, is denoted   by $(a_1,..., a_d;c_1,..., c_m) .$  Below, the ``star-representation'' of a data cube in the graphoid model is formally defined. 

\begin{definition}[Star-graphoid]\rm\label{def:star-of-cube}
Let  $C$ be a data cube over  dimensions $D_1,..., D_d$, with  measures $\mu_1,..., \mu_m$.
The \emph{star-graphoid} of $C$, denoted $\mbox{\sf Star}(C)$, is defined as follows.

\begin{itemize}
\item For $i=1,..., d$,  for each $a_i\in dom(D_i)$, there is a node of type $\#D_i$ with label $[\#D_i, id, a_i]$, where $id$ is a unique node identifier.

\item For each cell $(a_1,..., a_d;c_1,..., c_m) \in dom(D_1)\times\cdots\times dom(D_d)\rightarrow dom(\mu_1)\times\cdots\times dom(\mu_m)$,  there are $m$ hyperedges: for each $j=1,..., m$, there is a hyperedge of type $\#\mu_j$
with an empty source node set and with a target node set
consisting of all nodes labelled  $[\#D_i, id, a_i]$, for $i=1,..., d$, which is labelled $[\#\mu_j, c_j]$.\qed
\end{itemize}
\end{definition} 

Now,  the main theorem of this section is stated.

\begin{theorem}\rm \label{theorem:stronger}
The cube OLAP-operations $\mbox{\sf Roll-Up}$, $\mbox{\sf Drill-Down}$, $\mbox{\sf Slice}$ and $\mbox{\sf Dice}$ can be expressed (or simulated) by OLAP-operations on graphoids. 
\end{theorem}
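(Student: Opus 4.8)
The strategy is to fix once and for all the star-representation $\mbox{\sf Star}(C)$ of a data cube $C$ (Definition~\ref{def:star-of-cube}) and to prove, for each classical operation $\mathit{op}$, a ``commuting square''
$$\mbox{\sf Star}\big(\mathit{op}_{\mathrm{cube}}(C)\big)\;=\;\widehat{\mathit{op}}\big(\mbox{\sf Star}(C)\big),$$
where $\widehat{\mathit{op}}$ is a fixed (possibly composite) expression in the graphoid operations of Section~\ref{sec:olap-operations} and ``$=$'' is equality up to a renaming of node identifiers (equivalently, after $\mbox{\sf Minimize}$, which is unique by Proposition~\ref{prop:unique-minimal-graphoid}). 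Two of the four cases reduce immediately: $\mbox{\sf Drill-Down}$ to level $\ell'_d$ is, by the convention adopted in Section~\ref{sec:drill-down}, a roll-up of the base cube to $\ell'_d$, and $\mbox{\sf Slice}$ is defined from $\mbox{\sf Roll-Up}$ to an $All$-level. So the load is carried by the $\mbox{\sf Roll-Up}$ and $\mbox{\sf Dice}$ cases.

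For $\mbox{\sf Roll-Up}$ along $D_c$ from $\ell_c$ to $\ell'_c$, recall that $C' = \mbox{\sf Roll-Up}_{\mathrm{cube}}(C)$ has its $D_c$-axis ranging over $dom(D_c.\ell'_c)$, with measure $\mu_j$ of a cell $(a_1,\dots,b,\dots,a_d)$ equal to $F_j$ applied to the bag $\{\!\!\{\,\mu_j\text{-value of }(a_1,\dots,a,\dots,a_d) : \rho_{\ell_c\rightarrow\ell'_c}(a)=b\,\}\!\!\}$ --- well defined because dimensions are sound/balanced (Section~\ref{subsec:instances}). On the graphoid side take $\widehat{\mathit{op}} = \mbox{\sf Aggr}(\,\cdot\,,\#\mu_m,\mu_m,F_m)\circ\cdots\circ\mbox{\sf Aggr}(\,\cdot\,,\#\mu_1,\mu_1,F_1)\circ\mbox{\sf Minimize}\circ\mbox{\sf Climb}(\,\cdot\,,\#D_c,D_c.(\ell_c\rightarrow\ell'_c))$ and trace it: (i) $\mbox{\sf Climb}$ only rewrites each node label $[\#D_c,id,a]$ into $[\#D_c,id,\rho_{\ell_c\rightarrow\ell'_c}(a)]$, leaving every measure hyperedge and every other node untouched; (ii) $\mbox{\sf Minimize}$ identifies precisely the $\#D_c$-nodes now carrying a common value $b$, keeps the smallest-identifier one $\overline{n}_b$, and redirects every hyperedge, so that a cell-hyperedge $\emptyset\rightarrow\{n_1,\dots,n_c,\dots,n_d\}$ becomes $\emptyset\rightarrow\{n_1,\dots,\overline{n}_{\rho(a_c)},\dots,n_d\}$, and two cell-hyperedges coincide on their target set iff the corresponding cells have equal coordinates in $C'$; (iii) for each $j$, $\mbox{\sf Aggr}(\,\cdot\,,\#\mu_j,\mu_j,F_j)$ replaces each resulting bag of parallel $\#\mu_j$-hyperedges by a single one carrying $F_j$ of their values, its precondition ``labels agree apart from the measure'' holding because a $\#\mu_j$-hyperedge has no attribute other than $\mu_j$. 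Comparing node sets, hyperedge bags and labels with $\mbox{\sf Star}(C')$ term by term closes the case (so one cube roll-up is simulated by one climb, one minimisation and $m$ aggregations, i.e.\ by $m$ of the graphoid $\mbox{\sf Roll-Up}$'s of Definition~\ref{def:rollup}, the climbs after the first being identities). The $\mbox{\sf Drill-Down}$ case is the same with $\ell_c:=\mbox{Bottom}$, invoking the reduction above.

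For $\mbox{\sf Slice}$ of $D_s$, apply the $\mbox{\sf Roll-Up}$ simulation with $\ell'_c := All$ (this is exactly the graphoid $\mbox{\sf Slice}$ of Definition~\ref{def:slice}), after which the single surviving $\#D_s$-node carries $all$ and is adjacent to every remaining measure hyperedge; then apply $\mbox{\sf n-Delete}(\,\cdot\,,\#D_s)$ to strike that node from every target set. The outcome is $\mbox{\sf Star}(C'')$ for $C''$ the $(d-1)$-dimensional cube obtained by aggregating $C$ out over $D_s$ --- the classical meaning of slicing (if one instead reads classical slice as fixing $D_s$ to a member $c$, precede everything with $\mbox{\sf Dice}(\,\cdot\,,D_s.\ell_s=c)$); this is precisely the role for which $\mbox{\sf n-Delete}$ was introduced. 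For $\mbox{\sf Dice}$ on a condition $\varphi$ over levels $\ell'_i\ge\ell_i$, no climbing or aggregation is needed: take $\widehat{\mathit{op}} = \mbox{\sf Dice}(\,\cdot\,,\varphi)$. In $\mbox{\sf Star}(C)$ a measure hyperedge $e$ carries no dimension attribute in its own label, so each atom of $\varphi$ is evaluated on $e$ solely through the labels of $Adj(e)=\{n_1,\dots,n_d\}$, which encode exactly the cell coordinates of $e$ (lifted to the $\ell'_i$ by the available roll-up functions); hence $e\models\varphi$ iff the cell of $e$ satisfies $\varphi$, so $\mbox{\sf Dice}(\mbox{\sf Star}(C),\varphi)$ retains precisely the hyperedges of the $\varphi$-cells of $C$, i.e.\ it equals $\mbox{\sf Star}(\mbox{\sf Dice}_{\mathrm{cube}}(C,\varphi))$ up to identifiers. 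Since the $m$ hyperedges of any one cell share their $Adj$-set, they are retained or removed together, so $\mbox{\sf Dice}$ and $\mbox{\sf s-Dice}$ agree here; and all dimension-member nodes survive, possibly as isolated nodes, matching the convention that dicing does not shrink dimension domains.

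The step I expect to fight with is the $\mbox{\sf Roll-Up}$ case, specifically the bookkeeping that ties $\mbox{\sf Minimize}$ to $\mbox{\sf Aggr}$: one must verify that the partition of measure hyperedges induced by collapsing $\#D_c$-nodes of equal rolled-up value matches, cell for cell, the partition of cube cells over which $F_j$ acts, and that no cell is dropped or double-counted before $\mbox{\sf Aggr}$ runs --- which is where the cardinality remark after Definition~\ref{def:minimal-graphoid} ($\mbox{\sf Minimize}$ preserves the hyperedge-bag cardinality) and the soundness assumption on dimensions (making $C'$ well defined and the climb single-valued) are both used. Everything else is a routine structural comparison against Definition~\ref{def:star-of-cube}.
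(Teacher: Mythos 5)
Your overall strategy --- fix $\mbox{\sf Star}(C)$ once and prove a commuting square for each operation --- is exactly the paper's, and your treatment of $\mbox{\sf Roll-Up}$ (a $\mbox{\sf Climb}$, then $\mbox{\sf Minimize}$, then $\mbox{\sf Aggr}$, once per measure), of $\mbox{\sf Drill-Down}$ (reduction to a roll-up from $Bottom$) and of $\mbox{\sf Slice}$ (roll-up to $All$ followed by $\mbox{\sf n-Delete}$ of the $\#D_s$-nodes) matches the paper's proof step for step; you even make explicit the multi-measure bookkeeping that the paper sidesteps by assuming $m=1$.

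The one genuine gap is in the $\mbox{\sf Dice}$ case. You simulate the cube dice by the plain $\mbox{\sf Dice}$ and justify this by saying that the $m$ measure hyperedges of a cell share their adjacency set and are therefore ``retained or removed together.'' That inference is not available for plain $\mbox{\sf Dice}$: a shared $Adj(e)$ is the \emph{trigger} for the extra deletions performed by $\mbox{\sf s-Dice}$, not a property of $\mbox{\sf Dice}$ itself. Your argument goes through only because you restricted $\varphi$ to atoms over dimension levels, in which case every atom is evaluated exclusively through the (common) adjacent nodes, so all $m$ edges of a cell do receive the same verdict. But the cube dice that the theorem must simulate also admits measure atoms $\mu<c$, $\mu=c$, $\mu>c$ (the paper's proof treats these explicitly, and in the star encoding a measure is an ordinary edge attribute, so such atoms are legal). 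For a cell with measures $\mu_1,\mu_2$ and condition $\mu_1<c$ that the cell violates, plain $\mbox{\sf Dice}$ deletes the $\#\mu_1$-edge but keeps the $\#\mu_2$-edge, since $\mu_1<c$ cannot be evaluated on the latter and is hence ``not false'' there; the cell is then only half-deleted and the result is not $\mbox{\sf Star}(\mbox{\sf Dice}(C,\varphi))$. This is precisely why the paper uses $\mbox{\sf s-Dice}$: once one edge of a cell is removed, all edges with the same adjacent-node set are removed with it. Replacing your $\mbox{\sf Dice}$ by $\mbox{\sf s-Dice}$ and extending your atom-by-atom analysis to measure atoms (true or false on the one edge carrying $\mu$, undefined and hence not false on the other edges and on all adjacent nodes) repairs the case and recovers the paper's argument.
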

 
\begin{proof}
 Let $C$ be a data cube, and let $\mbox{\sf Star}(C)$ be its   star-graphoid. The proof  is based on 
 showing that  each of the classical OLAP operations  $\mbox{\sf Roll-Up}$, $\mbox{\sf Drill-Down}$, $\mbox{\sf Slice}$ and $\mbox{\sf Dice}$, over $C$,  can be equivalently applied on $\mbox{\sf Star}(C)$. The semantics for the classical OLAP operations is the one given
 in~\cite{KV17}. \medskip
 
\par\noindent {\bf Roll-Up.} 
For cube data,  a roll-up operation  takes as input a  data cube $C$, a dimension $D_c$ and a level $\ell_i$ in  $\sigma(D_c)$ 
and returns  the aggregation of the original cube along  $D_c$ up to level $\ell_c$ 
for all of the input measures $\mu_1,..., \mu_m$, using aggregate functions $F_1,...,F_m$. 
Assume, without loss of generality,  that  the roll-up starts at the $Bottom$ level, that is, at  $dom(D_c)$.  
Also  assume,  for the sake of clarity of exposition, that $m=1$, that is, that  there is only one measure, call
 it $\mu$, with associated aggregate function $F$.
Now, it will be shown that the roll-up $\mbox{\sf Roll-Up}(C,D_c.\ell_c; \mu, F)$ on the cube $C$
can be simulated on $\mbox{\sf Star}(C)$ by the 
graphoid OLAP-operation $\mbox{\sf Roll-Up}(\mbox{\sf Star}(C), \{ \#\mbox{D}_c   \}, 
D_c.(Bottom\rightarrow\ell_c);\#\mbox{\sf e}_\mu; \mu , F), $ 
where $\#\mbox{D}_c$ is the unique node type in $\mbox{\sf Star}(C)$ that contains information 
on  $D_c$ and where $\#\mbox{\sf e}_\mu$
is the unique edge type that contains measure information on $\mu$.

Let $(a_1,..., a_{c-1}, a_{c+1},..., a_d)$ be an element of 
$dom(D_1)\times\cdots dom(D_{c-1})\times dom(D_{c+1})\times\cdots \times dom(D_{d})$ and suppose that there are 
$r$ values $a_{c,i}$ from $dom(D_c)$ (for $i=1,...,r$) such that $(a_1,..., a_{c-1}, a_{c,i}, a_{c+1},$ 
 $..., a_d; m_i)$ appear in the cube $C$,
and such that all $a_{c,i}$ roll-up to the same element, call it $a_{ru}$, that means $\rho_{D_c.Bottom_k\rightarrow\ell_c}(a)=a_{ru}$. The roll-up on $C$ will replace these $r$ cells by one ``new'' cell which has 
coordinates $(a_1,..., \ab a_{c-1}, \ab a_{ru}, \ab a_{c+1},\ab ..., \ab a_d)$ in $dom(D_1)\times\ab \cdots \ab dom(D_{c-1})\ab \times \ab dom(D_c.\ell_c)\ab \times \ab dom(D_{c+1})\ab \times\cdots \times dom(D_{d})$, and which contains the aggregated measure $F(\{m_1,..., m_r\})$. In $\mbox{\sf Star}(C)$, each one of these ``new'' cells 
will be represented by a hyperedge. To achieve this,  the  following graphoid OLAP-operation is performed: 
$$\mbox{\sf Roll-Up}(\mbox{\sf Star}(C), \#\mbox{D}_c, D_c.(Bottom\rightarrow\ell_c); \#\mbox{\sf e}_\mu, \mu, F).$$

 To see the correctness of this claim,   the substeps in the above graphoid roll-up are analysed. 
 First,   $\mbox{\sf Climb}(\mbox{\sf Star}(C), \#\mbox{D}_c, D_c.(Bottom\rightarrow\ell_c))$ is performed;  a graphoid  called  $G_1$ is   obtained.  Compared against  $\mbox{\sf Star}(C)$, in $G_1$ all nodes and edges remain the same, except for the nodes 
 of type $\#\mbox{D}_c$, which now contain values at level $\ell_c$.
 Next, a minimisation is performed (to obtain a grouping on $D_c$), which may contract some nodes in $G_1$ into ``roll-up'' nodes.    
 Call the resulting graphoid $G_2$. 
 These roll-up nodes of $G_2$ simulate the ``new'' cells in the cube that store the aggregate information. Finally, 
 $\mbox{\sf Aggr}(G_2, \#\mbox{e}_\mu, \mu, F)$ contracts edges that have the same adjacency set and gives them the aggregated value of $\mu$ as attribute value. 

\medskip
\par\noindent {\bf Drill-Down.} As mentioned above, the drill-down to level $\ell$ can be seen as a roll-up from the $Bottom$ level  to level $\ell$.
Therefore, no proof is needed.   

\medskip
\par\noindent {\bf Slice.} 
On data cubes, the $\mbox{\sf Slice}$-operation takes as input a cube $C$, a dimension $D_s$ and returns a cube in which the dimension $D_s$ is dropped, and   all measures are aggregated over the dropped dimension. To   drop the dimension $D_s$,   a roll-up to the level $All$ in this dimension is needed first, such that its domain becomes a singleton. Thus, to simulate this on $\mbox{\sf Star}(C)$ using graphoid OLAP-operations,   a  climb to the level $All$ in the dimension $D_s$ is  
 performed, and therefore the proof of the roll-up case holds, taking into account that  all nodes representing $D_s$ will contain  
  the  value ``$all$''. Thus, the slice of the cube $C$ is simulated by $ \mbox{\sf Slice}({\sf Star}(C), D_s; \mu, F).$   
 There one step missing, however. When slicing a 
 dimension from a cube $C$, this dimension is deleted. In the case of the graphoid ${\sf Star}(C)$, the nodes of type $\# D_s$ are still 
 present in  $ G_1=\mbox{\sf Slice}({\sf Star}(C), D_s; \mu, F).$  So,   $\mbox{\sf n-Delete}(G_1, \#\mbox{D}_s)$ is needed 
 to delete these nodes.

\medskip
\par\noindent {\bf Dice.}
Intuitively, the $\mbox{\sf Dice}(C, \varphi)$ operation, 
where  $\varphi$  is a Boolean condition over level values and measures,
 selects the cells in a cube $C$ that satisfy  $\varphi$.  
  The resulting cube has the same dimensionality
as the original cube. 
It must be shown that   $\mbox{\sf Dice}(C, \varphi)$  can be simulated by $\mbox{\sf s-Dice}(\mbox{\sf Star}(C),\varphi).$ 
As in Section~\ref{subsec:dice}, take    $$\varphi=\bigvee_k\bigwedge_l \varphi_{kl},$$ 
with $\varphi_{kl}$ of the form   $D.\ell<c$, $D.\ell=c$ or $D.\ell>c$, where $D$ is a 
dimension, $\ell$ is a level in that dimension and $c\in dom(D.\ell)$;
or $\mu<c$, $\mu=c$ or $\mu>c$, where $\mu$ is a measure and $c$ belongs to the domain of that measure. 

Let $(a_1,..., a_d;c_1,..., c_m) \in dom(D_1)\times\cdots\times dom(D_d)\rightarrow dom(\mu_1)\times\cdots\times dom(\mu_m)$ be a cell of $C$
that satisfies $\varphi$. Denote this by $(a_1,..., a_d;c_1,..., c_m)\models \varphi$.
The proof here requires showing that the edges $e_j$, labelled $[\#\mu_j, c_j]$ (that are adjacent to the nodes $[\#D_i, id, a_i]$, for $i=1,..., d$), for $j=1,..., m$, also satisfy $\varphi$.
 From $(a_1,..., a_d;c_1,..., c_m)\models \varphi$ it follows that there exists a $k$ such that for all $l$,   
$(a_1,..., a_d;c_1,..., c_m)\models \varphi_{kl}$ holds. 

If $\varphi_{kl}$ is of the form   $D.\ell<c$, $D.\ell=c$ or $D.\ell>c$, 
then $\varphi_{kl}$ is undefined in the edge label and thus, it is  not false in it. Furthermore, because of the particular definition of 
stars in star-graphoids, where all nodes that are adjacent to an edge $e_j$ carry information on unique dimensions, 
$\varphi_{kl}$ is not false in all adjacent nodes that 
do not contain information on $D.\ell$ and it is true in the unique adjacent node that contains information on $D.\ell$. 
Therefore, the edge $e_j$ satisfies $\varphi_{kl}$.

If $\varphi_{kl}$ is of the form $\mu<c$, $\mu=c$ or $\mu>c$, then $\varphi_{kl}$ evaluates to true on one of the edges $e_j$ 
(that contains information on that measure $\mu$)  and is undefined  on the other edges (that contain information on other measures).
On the adjacent nodes to these edges, the condition $\varphi_{kl}$ is not false (since these nodes do not contain information on any measures).  In both cases, all these edges satisfy $\varphi_{kl}$. This means that the strong dice-operation will keep all these edges.

By a similar reasoning, it can be shown that when $(a_1,..., a_d;c_1,..., c_m)\not\models \varphi_{kl}$,  $e_j\not\models\varphi_{kl}$ holds.

This shows that exactly the edges (labelled $[\#\mu_j, c_j]$) corresponding to cells $(a_1,..., a_d;\ab c_1,\ab ...,\ab  c_m)$, where $\varphi$ is not satisfied are deleted from the graphoid 
$\mbox{\sf Star}(C)$ by the strong dice-operation. This completes the proof.
\medskip

\end{proof}

\section{Case Study and Discussion}
\label{sec:casestudy}

The running example followed so far in this paper will also be  used as a case study, in order to evaluate the hypergraph model against the traditional relational OLAP alternative. The example case  has many interesting characteristics, such as: (a) Normally it involves huge volumes of data facts (i.e., calls); (b) The number of dimensions involved in facts is variable, since  calls may differ from one another  in the  number of participants; (c) It allows  performing  not only the typical OLAP operations described in Section~\ref{sec:olap-operations}, over the fact measures, but also to aggregate the graph elements using graph measures like shortest paths, centrality, and so on.  Therefore, the case study is appropriate for illustrating and discussing the graphoid model usefulness in two situations: (a) The classic OLAP scenario, where the relational model is normally used;  and (b) A \textit{Graph OLAP} scenario, where graph metrics are aggregated.  The hypothesis to be tested here is that, although the relational OLAP alternative works better in scenario  (a), when   facts have a fixed dimensionality (e.g., when all calls in the database involve the same number of participants), the graphoid model is competitive when the number of dimensions is variable, and definitely better for scenario (b), where queries compute aggregations over graph metrics. 

The dataset to analyse consists of group calls between  phone lines, where a line cannot call itself, and the analyst also needs to identify the line who started the call.  The schemas of the background dimensions are the ones in Figure~\ref{fig:dimension-schema}, with small changes that will be explained below. Facts are similar to the ones in Figure~\ref{fig:fig-phone-as-graph}. 

Although  performing an exhaustive experimental study is beyond the scope of this paper, and will be part of future work, this section aims at \textit{analysing the plausibility of the graph model} to become a better solution that the relational model for the kinds of problems where factual data are naturally represented as graphs. For this, the graphoid model are compared against the relational alternative containing exactly the same data.  First, two alternative relational OLAP  representations are implemented on a PostgreSQL database,  and three  synthetic datasets of different sizes are produced and loaded into both representations. Then, the same datasets are loaded into a graph database. Neo4j is  used for this purpose, and queries are written in Cypher, Neo4j's high level query language.\footnote{\url{https://neo4j.com/developer/cypher-query-language/}}

\subsection{Relational Representation}
\label{sec:rolapcase}
Since the relational design may impact in query performance, 
two alternative designs for the fact table are  implemented 
in order to provide a fair comparison. In both cases, the  fact table  schema is the following: ${\sf Calls}\mbox{(CallId, CallerId, Participant,StartTime, EndTime, Duration)}.$  

The mea\-ning of the attributes is:
\begin{itemize}
\item CallId: Call identifier;
\item CallerId: The identifier of the line which initiated the call;
\item StartTime, EndTime: Initial and final instants of the call;
\item Duration: Attribute precomputed as (StartTime - EndTime).
\end{itemize}

Although the schemas are the same in both cases,  the instances differ from each other. In one case, a call between phone $Ph_1, Ph_2,$ and $Ph_3$, initiated by $Ph_1$,   contains the tuples $(\mbox{\textit{1}},Ph_1,Ph_2)$ and $(\mbox{\textit{1}},Ph_1,Ph_3).$ In the other case, a tuple $(\mbox{\textit{1}},Ph_1,Ph_1)$ is added to the other two to indicate that 
$Ph_1$ started the call.  This makes a difference for queries where the user is not interested in who did initiate the call.  In what follows, both relational representations are denoted   {\sf Calls} and {\sf Calls-alt}, respectively.

As expressed above, the background dimensions are the same of Figure~\ref{fig:dimension-schema}. There are two slight differences, however, for practical reasons. First,  for the Time  dimension, the bottom level has granularity {\sc Timestamp}, since the StartTime and EndTime attributes in the fact tables have that granularity. That means, a new level is added to the dimension. Second, in the  Phone  dimension the bottom level is the phone identifier, denoted  Id, which rolls up to the line number, denoted Number. This is because  the caller  and the callee are represented as integers, as usual in real world data warehouses.  The Phone dimension is represented in a single table, keeping the constraints indicated by the hierarchies. This representation (i.e., Star) was chosen to provide a fair comparison.  In summary, the dimension  table schema is ${\sf Phone}\mbox{(Id, Number, Customer, City, Country, Operator)}.$   

\subsection{Graphoid-OLAP Representation}

The logical model for the  graphoid representing the calls (i.e., the base graphoid), is similar to the 
one depicted in Figure~\ref{fig:fig-phone-as-graph}.   There are two main entity  nodes, namely  $\#\mbox{{\sf Phone}}$ and 
$\#\mbox{{\sf Call}}$, to represent call facts. These are linked through edges  labelled 
$\#\mbox{{\sf creator}}$ and $\#\mbox{{\sf receiver}},$ the former going from the phone that initiated the call, to the node 
representing such call. Background dimensions are  represented in the same graph, using the entity nodes 
$\#\mbox{{\sf Operator}},$  $\#\mbox{{\sf User}},$ $\#\mbox{{\sf City}}$ and $\#\mbox{{\sf Country}}$ for the dimension levels. 
Finally,  dimension levels are linked using the edges of types $\#\mbox{{\sf provided\_by}},$  $\#\mbox{{\sf has\_phone}},$ $\#\mbox{{\sf belongs\_to}}$ and $\#\mbox{{\sf lives\_in}}.$ It can be observed that nodes are not duplicated. 
 
 \ignore{
 Figure~\ref{fig:screen1} shows a portion of the 
 running example implemented in  Neo4j.

   \begin{figure}[h]
    \centering
            \includegraphics[scale=0.31]{neo4jGraph.png}
            \caption{Portion of the Call-graph.}
            \label{fig:screen1}
 \end{figure} 
 }
\subsection{Datasets}

For the relational representation, synthetic datasets  of two different sizes are generated and loaded into a PostgreSQL  database. Table~\ref{tab:rolapdata} depicts the sizes of the datasets. The first column shows the number of tuples in the {\sf Calls} fact table. The second column shows the number of tuples in the   {\sf Calls-alt} fact table. The third column indicates the number of calls (only one column, since the number of  calls is the same in both versions), and the fourth column tells  the number of tuples in the {\sf Phone} dimension table. 

\begin{table}
\caption{Dataset sizes for the relational representation}
\begin{center}
{
\scriptsize
\begin{tabular}{|c|c|c|c|c|}
\hline 
Dataset &   tuples {\sf Calls}  &  tuples {\sf Calls-alt} & calls &  tuples {\sf Phone} \\ 
\hline 
D1 & 293,817& 420,517& 126,700 & 793\\ 
\hline 
D2 & 528,408 & 756,117& 227,709&  4,689\\ 
\hline 
\end{tabular} }
\end{center}
\label{tab:rolapdata}
\end{table}

For the graph representation,   Table 
~\ref{tab:graphdata} depicts the main numbers of elements in the Neo4j graph. 
\begin{table}
\caption{Dataset sizes for the graph representation}
\begin{center}
{
\scriptsize
\begin{tabular}{|c|c|c|c|c|c|}
\hline 
 Dataset&  Phone nodes &  User nodes &   Call nodes &  creator edges &  receiver  edges \\ 
\hline 
D1 & 793 & 500 & 126,700 & 126,700 & 293,817\\ 
\hline 
D2 &4,689  & 3,000& 227,710 & 227,709  & 528,408 \\ 
\hline 
\end{tabular} }
\end{center}
\label{tab:graphdata}
\end{table}

\subsection{Queries}
This section shows how different kinds of complex analytical queries can be expressed and executed over the three  representations described above. Four  kinds of OLAP queries are discussed: (a) Queries where the aggregations are performed for pairs of objects (e.g., phone lines, persons, etc.); (b) Queries where aggregations are performed in groups of $\mbox{N}$  objects, where $\mbox{N}>2$; (c) For (a) and (b), rollups to different dimension levels are performed.; (d) Graph OLAP-style aggregations performed over graph metrics.  The idea of these experiments is to study if, when the queries can take advantage of the graph structure, graphoid-OLAP  queries are more concisely  expressed, and  more efficiently   executed. The impact of $\mbox{N}$ in the relational and the graph representation is also studied. 
The queries are described next. For the sake of space,   only some of the
SQL and Neo4j queries are shown. 

\begin{query}
Average duration of the calls between  groups of N phone lines.  
\end{query}

This query computes all the  $N$-subsets of  lines that participated in some call. 
That means, if a call involves 3 lines, say $Ph_1, Ph_2$ and $Ph_3$, and   $N=2$,
the groups  will be $(Ph_1, Ph_2),$ $(Ph_1, Ph_3),$ and $(Ph_2, Ph_3).$ Figure~\ref{fig:q11} shows the recursive SQL query for 
the first representation alternative. \ignore{The Cypher query  for $N=2$
is shown in Figure~\ref{fig:q13}, and  Figure~\ref{fig:q14} shows the Cypher query  for $N=3.$

%%%%%%% Q11 SQL %%%%%%
\begin{figure}[t]
\begin{mdframed}
\begin{small}
\begin{verbatim}
set myvars.recsize = 2;  --3,4, 5 or any number
WITH RECURSIVE records(CallId, ids, Duration) as
(
SELECT CallId, array[Number], Duration
FROM Calls  JOIN Phone AS member ON 
Calls.Participant = member.PhoneId

UNION 

SELECT  CallId, array[Number], Duration
FROM Calls  JOIN  Phone AS member ON
Calls.CallerId = member.PhoneId

UNION  
 
SELECT Calls.CallId, array (SELECT unnest(array[Number] || ids)  
       As x ORDER BY x), Calls.Duration
FROM Calls JOIN Phone AS member ON Calls.Participant = member.Id, 
records
WHERE Calls.CallId = records.CallId
	AND array_length(ids, 1) < current_setting('myvars.recsize')::int 
	AND  Number	<> ALL(ids)
)
SELECT ids, avg(duration)
FROM records 
WHERE array_length(ids, 1) > 1
GROUP BY ids;
\end{verbatim}
\end{small}
\end{mdframed}
\caption{Query 1 - {\sf Calls} representation.}
\label{fig:q11}
\end{figure}
 
%%%%%%%%%%%  Q 11 CYPHER N= 2 %%%%%%%%
\begin{figure}[t]
\begin{mdframed}
\begin{small}
\begin{verbatim}
MATCH  (p1:Phone)-[:creator|:receiver]-(m:Call)
       -[:creator|:receiver]-(p2:Phone) 
WHERE p1.id < p2.id
RETURN  p1,p2, avg(m.duration)
\end{verbatim}
\end{small}
\end{mdframed}
\caption{Query 1 - Cypher query for $N=2$}
\label{fig:q13}
\end{figure}

%%%%%%%%%%%  Q 11 CYPHER N= 3 %%%%%%%%
\begin{figure}[t]
\begin{mdframed}
\begin{small}
\begin{verbatim}
MATCH (t1 :Phone)<-[:creator|:receiver]-(c :Call)-
      [:creator|:receiver]->(t2 :Phone),(t3:Phone)
      <-[:creator|:receiver]-(c :Call)
WHERE t1.number < t2.number  AND t2.number < t3.number
RETURN t1.number, t2.number, t3.number, avg(c.duration)
\end{verbatim}
\end{small}
\end{mdframed}
\caption{Query 1 - Cypher query for $N=3$}
\label{fig:q14}
\end{figure}
}
\begin{query}
Average duration of the calls between  groups of $\mbox{N}$ users.
\end{query}
\ignore{
Figure~\ref{fig:q3-1} shows the Cypher query for N=3. 
Note that in all cases  the queries actually perform a roll-up   to the level User along the Phone dimension.
The relational queries perform this roll-up through a join between the fact and dimension tables. 
In the case of Neo4j  the roll-up
is performed using by pattern matching. That is, the climbing (in the graphoid OLAP model) 
is done by the {\sf MATCH} clause (the climbing path is explicit in this clause), while
the aggregation is performed in the {\sf RETURN} clause. 

\begin{figure}[ht]
\begin{mdframed}
\begin{small}
 \begin{verbatim}
MATCH (u1:User)<-[:has_phone]-(t1:Phone)<-[:creator|:receiver]-
      (c:Call)-[:creator|:receiver]->(t2:Phone)-[:has_phone]->
      (u2:User), (u3:User)<-[:has_phone]-(t3:Phone)<-
      [:creator|:receiver]-(c:Call)
WHERE u1.id < u2.id  AND u2.id < u3.id
RETURN u1.name, u2.name, u3.name, avg(c.duration) 
\end{verbatim}
\end{small}
\end{mdframed}
\caption{Query 2 -  Cypher for $N=3$.}
\label{fig:q3-1}
\end{figure}

}

\begin{query}
Average duration of the calls between  groups of $\mbox{N}$ operators.
\end{query}
This analyses a roll-up to the level Operator, which has less instance members than the level User addressed in Query 2.

\begin{query}
For each pair of Phones in the Calls graph, compute the shortest path between them. 
\end{query}

This query aims   at analysing the  connections between phone line users, and has many real-world applications (for example, to investigate calls made between two persons who use a third one as an intermediary).
From a technical point of view, this is an aggregation over the whole graph, using as a metric the shortest path between every pair of nodes.  
\ignore{Figure~\ref{fig:q4-1} shows the corresponding Cypher query. }

Finally, the following queries combine the computation of graph metrics together with roll-up and dice operations.

\begin{query}
Compute the shortest path between pairs $(p_1,p_2)$ of phone lines, such that $p_1$ corresponds to operator ``Claro'' and  $p_2$ corresponds to operator ``Movistar''.
\end{query}

\begin{query}
Compute the shortest path between pairs $(p_1,p_2)$ of phone lines, such that $p_1$ corresponds to a user from the city of Buenos Aires  and  $p_2$ corresponds to a user from the city of Salta.
\end{query}

\begin{query}
Compute the shortest path between pairs $(p_1,p_2)$ of phone lines, such that $p_1$ corresponds to a user from the city of Buenos Aires.  
\end{query}

\ignore{
\begin{figure}[t]
\begin{mdframed}
\begin{small}
\begin{verbatim}
MATCH (m:Phone),(n:Phone)
WITH m,n WHERE m<>n
MATCH p= shortestPath((m)-[:receiver|:creator *]-(n))
RETURN p, length(p)
\end{verbatim}
\end{small}
\end{mdframed}
\caption{Query 3 -  Cypher expression. }
\label{fig:q4-1}
\end{figure}
 }
\subsection{Results}

Table \ref{tab:results} shows the results of the experiments. The tests were ran on machine with a  i7-6700 processor and  12 GB of RAM, and  250GB disk (actually, a virtual node in a cluster). The execution times are depicted, and are the averages of five runs of each experiment, expressed in seconds. The winning alternatives are marked in \textbf{boldface}, for clarity.

\begin{table}[h!]
\caption{Experimental results (running times in seconds).}
\begin{scriptsize}
\begin{center}
{\begin{tabular}{|c|c|c|c|c|c|c|c|c|c|}
\hline 
 Dataset & {\sf Calls} & {\sf Calls}&  {\sf Calls} &{\sf Calls-alt}& {\sf Calls-alt}& {\sf Calls-alt}& Neo4j & Neo4j& Neo4j\\ 
 &  $N=2$ & $N=3$  & $N=4$ & $N=2$ &   $N=3$ & $N=4$& $N=2$ &  $N=3$ &  $N=4$\\
\hline 
D1-Q1 & \textbf{4.9} &\textbf{7.6} & \textbf{9.5} & 5.4  & 8.7 & 10.6  & 7.3 & 11.2 & 12.5\\ 
\hline 
D1-Q2 & 4.6 & \textbf{11.7}& \textbf{12.9} & \textbf{4.4}& 12.3 & 14.5 & 7 & \textbf{11.7} & 14.8 \\ 
\hline 
D1-Q3 & 6.6 & \textbf{7.3} & \textbf{11.5} & 12.8 & 12.6 & 14.7 & \textbf{3.7}& 10.8 & 15.5 \\ 
\hline 
D1-Q4 & $\infty$ & N/A  & N/A & $\infty$ & N/A  & N/A  & \textbf{185}   & N/A & N/A  \\ 
\hline
D1-Q5 & $\infty$ & N/A  & N/A & $\infty$ & N/A  & N/A  &  \textbf{21}   & N/A & N/A  \\ 
\hline
D1-Q6 & $\infty$ & N/A  & N/A & $\infty$ & N/A  & N/A & \textbf{6}  & N/A & N/A  \\ 
\hline
D1-Q7 & $\infty$ & N/A  & N/A & $\infty$ & N/A  & N/A  &  \textbf{34}  & N/A & N/A  \\ 
\hline\hline 
D2-Q1  & \textbf{9.3}  & \textbf{14.1}   & 15.1  & 10.4    & 16.2   & 17.7  & 15.6 & 17.5& 21.6\\ 
\hline 
D2-Q2  & \textbf{12.9} &\textbf{19} & 20.7  &  14.5 &  24 & 26.8   & 20.2 & 21.6& 24.8 \\ 
\hline 
D2-Q3 &   12.5  & 19.4  & 22.2  &  14.3   & \textbf{14.6} & 22.8   & \textbf{9.3} &  18.7 & 28.4\\ 
\hline 
D2-Q4 & $\infty$ & N/A  & N/A & $\infty$ & N/A  & N/A  & $\infty$   & N/A & N/A  \\
\hline
D2-Q5 & $\infty$ & N/A  & N/A & $\infty$ & N/A  & N/A  &  \textbf{677}   & N/A & N/A  \\ 
\hline
D2-Q6 & $\infty$ & N/A  & N/A & $\infty$ & N/A  & N/A  & \textbf{123}   & N/A & N/A  \\ 
\hline
D2-Q7 & $\infty$ & N/A  & N/A & $\infty$ & N/A  & N/A  & \textbf{924}   & N/A & N/A  \\ 
\hline
\end{tabular} }
\end{center}
\end{scriptsize}
\label{tab:results}
\end{table}

\subsection{Discussion of Results}

In Table \ref{tab:results} it can be seen that running traditional OLAP queries, like Query 1, Query 2 and Query 3, takes approximately the same time in the relational and graphoid models, with a slight advantage for the former. Further, it can be seen that for Queries 2 and 3, which include a roll-up, results are very similar, and even Neo4j wins here in some cases.  In Query 1, which is an aggregation over the fact graph, the relational alternatives work better.\footnote{It is worth noting that Neo4j (and graph databases in general) is a novel database, whose query optimization strategy is still very basic. On the contrary, relational databases are mature technologies, and query optimization is very efficient indeed. Further, for the experiments presented here, the PostgreSQL databases have been tuned to perform in the best possible way. In this sense, Neo4j's performance for typical OLAP queries is, in some sense, penalized.}  However, for  typical Graph OLAP queries (Queries 4 through 7), which aggregate graph metrics, \textit{the graph model shows a dramatical advantage over the relational alternative.}  For Neo4j,  Query 4 does not finish within a reasonable time for the largest of the two datasets (D2) but performance is acceptable for  D1. On the other hand, the relational alternatives do not 
terminate successfully neither for D1 nor for D2.  It is important to make it clear that with an ad-hoc relational design, specifically for graph representation,  it is possible that the performance of  the relational alternative for shortest path aggregations  could be improved, although it will hardly be close to the graph alternative, given the results presented here. However, the intention of this paper is to present a flexible model that can perform efficiently on a variety of situations. In this sense, the tests presented here  suggest that the graphoid data model can be  competitive with the relational model for classic OLAP queries, but is   much better for typical Graph OLAP ones.   

 \section{Conclusion and Open Problems}
     \label{sec:conclu}
This paper  presented a data model for graph analysis based on node- and edge-labelled directed multi-hypergraphs, called graphoids.  A collection of OLAP operations, analogous to the ones that apply to data cubes, was formally defined over graphoids. It was also formally proved that the classic data cube model is a particular case of the graphoid data model. As far as the authors are aware of, this is the first proposal that formally addresses the problem of defining OLAP operations over hypergraphs. Supported by  this proof, it was shown that the graphoid model can be competitive with the relational implementation of OLAP, but clearly much better when graph operations are used  to aggregate graphs. This feature allows  devising a general  OLAP framework that may cope with the flexible needs of modern data analysis, where data may arrive in different forms.  It is worth to remark, once more, that the  experiments presented do not pretend to be exhaustive, but a good general indication of  the plausibility of the approach, and it is clear that the graph data model provides OLAP with a machinery of more powerful tools than the classic cube data model, which is already good news for the OLAP practitioners.
 
Building on the results in this paper, future work includes looking for further graph metrics that can be applied to the graphoid  model, new case studies, and the study of query optimization strategies. Moreover, the approach can also benefit from tools supporting parallel computation with columnar databases as backends. This can further improve the relational OLAP computation, while keeping the properties of the graphoid model for Graph OLAP queries.   

\section*{Acknowledgments}  
Alejandro Vaisman was  supported by a travel grant from Hasselt University (Korte verblijven--inkomende mobiliteit, BOF16KV09). He was also partially supported by 
PICT-2014 Project 0787 and  
PICT-2017 Project 1054. The authors also  thank T. Colloca, S. Ocamica, J. Perez Bodean, and N. Casta\~no, for their collaboration in the data preparation for the experiments.

\bibliographystyle{plain}
%\bibliography{ref-short}

\begin{thebibliography}{10}

\bibitem{Angles2012}
R.~Angles.
\newblock {A Comparison of Current Graph Database Models}.
\newblock In {\em Proceedings of {ICDE} Workshops}, pages 171--177, Arlington,
  VA, USA, 2012.

\bibitem{AnglesABHRV17}
R.~Angles, M.~Arenas, P.~Barcel{\'{o}}, A.~Hogan, J.~L. Reutter, and D.~Vrgoc.
\newblock Foundations of modern query languages for graph databases.
\newblock {\em {ACM} Comput. Surv.}, 50(5):68:1--68:40, 2017.

\bibitem{graphOLAP}
C.~Chen, X.~Yan, F.~Zhu, J.~Han, and P.~Yu.
\newblock {Graph OLAP}: a multi-dimensional framework for graph data analysis.
\newblock {\em Knowl. Inf. Syst.}, 21(1):41--63, 2009.

\bibitem{CohenDDHW09}
J.~Cohen, B.~Dolan, M.~Dunlap, J.M. Hellerstein, and C.~Welton.
\newblock {MAD Skills}: New analysis practices for big data.
\newblock {\em Proceedings of the VLDB Endowment}, 2(2):1481--1492, 2009.

\bibitem{OLAPonBigData}
Alfredo Cuzzocrea, Ladjel Bellatreche, and Il-Yeol Song.
\newblock {Data Warehousing and OLAP over Big Data: Current Challenges and
  Future Research Directions}.
\newblock In {\em Proceedings of DOLAP}, pages 67--70, New York, NY, USA, 2013.
  ACM.

\bibitem{GomezKV17}
Leticia~I. G{\'{o}}mez, Bart Kuijpers, and Alejandro~A. Vaisman.
\newblock Performing {OLAP} over graph data: Query language, implementation,
  and a case study.
\newblock In {\em Proceedings of BIRTE, Munich, Germany, August 28, 2017},
  pages 6:1--6:8, 2017.

\bibitem{GKV19}
Leticia~I. G{\'{o}}mez, Bart Kuijpers, and Alejandro~A. Vaisman.
\newblock Analytical queries on semantic trajectories using graph databases.
\newblock {\em {TGIS} Trans. Geog. Inf. Syst.}, 23(5), 2019.

\bibitem{Hartig14}
O.~Hartig.
\newblock Reconciliation of {RDF*} and property graphs.
\newblock {\em CoRR}, abs/1409.3288, 2014.

\bibitem{Kimball1996}
Ralph Kimball.
\newblock {\em The Data Warehouse Toolkit}.
\newblock J. Wiley and Sons, 1996.

\bibitem{KraiemFKRT15}
M.~B. Kraiem, J.~Feki, K.~Khrouf, F.~Ravat, and O.~Teste.
\newblock Modeling and {OLAP}ing social media: the case of twitter.
\newblock {\em Social Netw. Analys. Mining}, 5(1):47:1--47:15, 2015.

\bibitem{KV17}
Bart Kuijpers and Alejandro~A. Vaisman.
\newblock An algebra for {OLAP}.
\newblock {\em Intelligent Data Analysis}, 21(5), 2017.

\bibitem{RehmanWS13}
N.~U. Rehman, A.~Weiler, and M.~H. Scholl.
\newblock {OLAPing} social media: the case of twitter.
\newblock In {\em Advances in Social Networks Analysis and Mining 2013,
  {ASONAM} '13}, pages 1139--1146, Niagara, ON, Canada, 2013.

\bibitem{Robinson13}
I.~Robinson, J.~Webber, and Emil Eifr\'em.
\newblock {\em Graph Databases}.
\newblock O'Reilly Media, 2013.

\bibitem{DBLP:conf/sigmod/TangHYDZ17}
Bo~Tang, Shi Han, Man~Lung Yiu, Rui Ding, and Dongmei Zhang.
\newblock Extracting top-k insights from multi-dimensional data.
\newblock In {\em Proceedings of {ACM SIGMOD}, Chicago, IL, USA, May 14-19,
  2017}, pages 1509--1524, 2017.

\bibitem{VZ14}
A.~A. Vaisman and E.~Zim\'anyi.
\newblock {\em {Data Warehouse Systems: Design and Implementation}}.
\newblock Springer, 2014.

\bibitem{VBV19}
Alejandro Vaisman, Florencia Besteiro, and Maximiliano Valverde.
\newblock “modelling and querying star and snowflake warehouses using graph
  databases.
\newblock In {\em Proceedings of {ADBIS} Conference 2019, Bled, Slovenia, Sept.
  8-11, 2019}, 2017.

\bibitem{Wang2014}
Z.~Wang, Q.~Fan, H.~Wang, K-L. Tan, D.~Agrawal, and A.~El Abbadi.
\newblock Pagrol: Parallel graph {OLAP} over large-scale attributed graphs.
\newblock In {\em Proceeding of {IEEE ICDE}}, pages 496--507, 2014.

\bibitem{graphCube}
Peixiang Zhao, Xiaolei Li, Dong Xin, and Jiawei Han.
\newblock {Graph Cube}: on warehousing and {OLAP} multidimensional networks.
\newblock In {\em Proceedings of {ACM SIGMOD}}, pages 853--864. ACM, 2011.

\end{thebibliography}

\end{document}